\newcommand{\negphantom}[1]{\settowidth{\dimen0}{#1}\hspace*{-\dimen0}}
\newcommand{\trimmedref}[1]{\ref{#1}\negphantom{\enspace}}
\DeclareMathOperator{\ourtimes}{\ast}
\DeclareMathOperator{\trunc}{\mathrm{trunc}}
\DeclarePairedDelimiterXPP\ourfloor[1]{}{\lfloor} {\rfloor}{}{#1}
\DeclarePairedDelimiterXPP\ourceiling[1]{}{\lceil} {\rceil}{}{#1}
\DeclareMathOperator*{\lsb}{\textrm{lsb}}
\newcommand{\reciprocal}{\ensuremath{c}}
\DeclareMathOperator*{\signedmod}{\bmod}
\DeclareMathOperator*{\ourdiv}{\mathrm{div}}
\DeclareMathOperator*{\signeddiv}{\ourdiv}
\lstdefinelanguage
   [x64]{Assembler}     
   [x86masm]{Assembler} 
{morekeywords={CDQE,CQO,CMPSQ,CMPXCHG16B,JRCXZ,LODSQ,MOVSXD, %
                  POPFQ,PUSHFQ,SCASQ,STOSQ,IRETQ,RDTSCP,SWAPGS, movabs, %
                  rax,rdx,rcx,rbx,rsi,rdi,rsp,rbp, %
                  r8,r8d,r8w,r8b,r9,r9d,r9w,r9b}} 
\lstdefinestyle{customc}{%
  belowcaptionskip=1\baselineskip,
  breaklines=true,
  xleftmargin=\parindent,
  language=C,
  showstringspaces=false,
  basicstyle=\ttfamily,
  keywordstyle=\bfseries\color{green!30!black},
  numberstyle=\tiny,
  commentstyle=\color{purple!30!black},
  identifierstyle=\bfseries\color{black},
  stringstyle=\color{orange},  morekeywords={uint64_t,uint32_t,int64_t,__int128_t,int32_t,__m256i,__m128i,UINT64_C},
}
\newtheorem*{example}{Example}
\begin{document}
\newcommand{\ourtitle}{Faster Remainder by Direct Computation}
\newcommand{\oursubtitle}{Applications to Compilers and Software Libraries}

\title{\ourtitle{}\\\subtitlefont \oursubtitle }

\author[1]{Daniel Lemire*}

\author[2]{Owen Kaser}

\author[3]{Nathan Kurz}

\authormark{D. LEMIRE, O. KASER, N. KURZ}

\address[1]{\orgdiv{TELUQ}, \orgname{Universit\'e du Qu\'ebec}, \orgaddress{\state{Quebec}, \country{Canada}}}

\address[2]{\orgdiv{Computer Science Department}, \orgname{UNB Saint John}, \orgaddress{\state{New Brunswick}, \country{Canada}}}

\address[3]{\orgaddress{\state{Vermont}, \country{USA}}}

\corres{*Daniel Lemire, 5800 Saint-Denis, Office 1105, Montreal, Quebec, H2S 3L5 Canada. \email{lemire@gmail.com}}

\abstract[Summary]{%
On common processors, integer multiplication is many times faster than integer  division.
Dividing a numerator $n$ by a divisor $d$ is mathematically equivalent to multiplication by the inverse of the divisor ($n / d = n \ourtimes 1/d$).
If the divisor is known in advance---or if repeated integer divisions will be performed with the same divisor---it
can be beneficial to substitute a less costly multiplication for an expensive division.

 Currently, the remainder of the division by a constant is computed from the quotient by a multiplication and a subtraction.
But if just the remainder is desired and the quotient is unneeded, this may be suboptimal.
We present a generally applicable algorithm to compute the remainder more directly. Specifically, we use the fractional portion of the product of the numerator and the inverse of the divisor. On this basis,
we also present a new, simpler divisibility algorithm to detect nonzero remainders.

We also derive new tight bounds
on the precision required when representing the inverse of the divisor.
Furthermore, we present simple C implementations  that beat
the optimized code produced by state-of-art C compilers on recent x64  processors (e.g., Intel Skylake and AMD Ryzen), sometimes by more than 25\%.
On all tested platforms including 64-bit ARM and POWER8, our divisibility-test functions are  faster than  state-of-the-art Granlund-Montgomery divisibility-test functions, sometimes by more than 50\%.
}

\keywords{Integer Division, Bit Manipulation, Divisibility}

\jnlcitation{\cname{%
\author{D. Lemire},
\author{O. Kaser}, and
\author{N. Kurz}},
\ctitle{\ourtitle{}}, \cvol{2018}.}

\maketitle

\thispagestyle{empty}

\section{Introduction}

Integer division often refers to two closely related concepts, the
actual division and the modulus. 
Given an integer numerator $n$ and a non-zero integer divisor $d$, the integer division,
written $\ourdiv$,  gives  the integer quotient
($n \ourdiv d = q$). The modulus, written $\bmod$, gives the integer remainder ($n \bmod d = r$).   Given an integer numerator $n$ and an integer divisor $d$, the quotient ($q$) and the remainder ($r$) are always integers even when the fraction $n/d$ is not an integer.   It always holds that the quotient multiplied by the divisor plus the remainder gives back the numerator: $n = q \ourtimes d + r$.

Depending on the context, `integer division' might refer solely to the computation of the quotient, but might also refer to the computation of both the integer quotient and the remainder.  The integer division instructions on x64 processors compute both the quotient and the remainder.\footnote{We use \emph{x64} to refer to the commodity Intel and AMD processors supporting the 64-bit version of the x86 instruction set. It is also known as x86-64, x86\_64, AMD64 and Intel~64.} In most programming languages, they are distinct operations: the C programming language uses \texttt{/} for division ($\ourdiv$) and \texttt{\%} for modulo ($\bmod$).

\newcommand{\stones}{items}
Let us work through a simple example to illustrate how we can replace an integer division by a multiplication.  Assume we have a pile of 23~\stones, and we want to know how many piles of 4~\stones\ we can divide it into and how many will be left over
($n=23$, $d=4$; find $q$ and $r$).  Working in base 10, we can
 calculate the quotient $23 \ourdiv 4 = 5$ and the remainder $23 \bmod 4 =
3$, which means that there will be 5~complete piles of 4~\stones\ with 3~\stones\
left over ($23 = 5 \ourtimes 4 + 3$).  

If for some reason, we do not have a runtime
integer division operator (or if it is too expensive for our purpose), we can instead
precompute the multiplicative inverse of 4 once ($\reciprocal=1/d=1/4=0.25$) and then
calculate the same result
using a multiplication ($23 \ourtimes 0.25 = 5.75$).  The quotient is the integer portion of the product to the left
of the decimal point ($q = \ourfloor*{5.75} = 5$), and the remainder can be
obtained by multiplying the fractional portion $f=0.75$ of the product
by the divisor $d$: $r = f \ourtimes d = 0.75 \ourtimes 4 = 3$.

The binary registers in our computers do not have a built-in concept of a fractional portion, but we can adopt a fixed-point
convention.   Assume we have chosen a convention
where $1/d$ has 5~bits of whole integer value and 3~bits of `fraction'.
The numerator 23 and divisor 4 would still be represented as  standard 8-bit binary values
(00010111 and 00000100, respectively),
but $1/d$ would be 00000.010.   From the processor's
viewpoint, the rules for arithmetic are still the same as if we did
not have a binary point---it is
only our interpretation of the units that has changed.   Thus we can
use the standard (fast) instructions for multiplication ($00010111
\ourtimes 00000010 = 00101110$) and then mentally put the `binary point' in the correct
position, which in this case is 3 from the right: 00101.110.
The quotient $q$ is the integer portion (leftmost 5 bits) of this result: 00101 in
binary ($q=5$ in decimal).
 In effect, we can compute the quotient $q$ with a multiplication (to get 00101.110) followed
by a right shift (by three bits, to get 000101).
To find the remainder, we can multiply the fractional portion (rightmost 3~bits)
of the result by the divisor: $00000.110 \ourtimes 00000100 = 00011.000$ ($r = 3$
in decimal).
To quickly check whether a number is divisible by 4 ($n \mod 4 = 0$?) without computing the remainder it suffices to check whether the fractional portion of the product is zero.

But what if instead of dividing by 4, we wanted to divide by 6?  While
the reciprocal of 4 can be represented exactly with two digits
of fixed-point fraction in both binary and decimal,  $1/6$ cannot be
exactly represented in either.  As a decimal fraction, $1/6$ is equal to
the repeating fraction  0.1666\ldots (with a repeating 6), and in binary
it is 0.0010101\ldots (with a repeating 01).
Can the same technique
work if we have a sufficiently close approximation to the reciprocal for any divisor, using enough fractional bits?  Yes!  

For example, consider a convention
where the approximate reciprocal $\reciprocal$ has 8 bits, all of which are fractional. We can use the value 0.00101011 as our approximate reciprocal of 6. To divide $23$ by 6, we can multiply the numerator (10111 in binary) by the approximate reciprocal: $n \ourtimes \reciprocal = 00010111 \ourtimes 0.00101011 = 11.11011101$. As before, the decimal point is merely a convention, the computer need only 
multiply fixed-bit integers. From the product, the quotient of the  division is 11 in binary ($q=3$ in decimal); and indeed $23\ourdiv 6=3$.  To get the remainder, we multiply the fractional portion of the product by the divisor ($f \ourtimes
d = 0.11011101 \ourtimes 00000110 = 101.00101110$),
and then right shift
by 8~bits, to get 101 in binary ($r=5$ in decimal). See Table~\ref{tab:example6} for other examples.

\begin{table}
\caption{\label{tab:example6} Division by 6 ($d=110$ in binary) using a multiplication by the approximate reciprocal ($\reciprocal=0.00101011$ in binary).
The numerator $n$ is an $N$-bit value,  with $N=6$. The approximate reciprocal uses $F=8$~fractional bits. The integer portion of the product ($N$ bits) gives the quotient. Multiplying the fractional portion of the product ($F$ bits) by the divisor ($N$ bits) and keeping only the integer portion ($N$ bits), we get the remainder (two last columns). The integer portion in bold (column 2) is equal to the quotient (column 3) in binary. The integer portion in bold (column 4) is equal to the remainder (column 5) in binary.
}
\centering
\begin{minipage}{\textwidth}
\centering
\begin{tabular}{clclc}\toprule
$n$     & numerator times the approx.\ reciprocal ($n\ourtimes \reciprocal$)                             & quotient & fractional portion $\ourtimes$ divisor  ($f \ourtimes d$) & remainder  \\ 
    & $N$ bits $\ourtimes$ $F$ bits  $\to$ $N+F$ bits                             & $N$ bits & $F$ bits $\ourtimes$ $N$ bits $\to$ $N+F$ bits & $N$ bits \\ \midrule
0 &   $ 000000 \ourtimes 0.00101011 = \textbf{000000}.00000000$ & 0  &  $ 0.00000000 \ourtimes 000110 = \textbf{000000}.00000000$ & 0 \\
1 &   $ 000001 \ourtimes 0.00101011 = \textbf{000000}.00101011$ & 0  &  $ 0.00101011 \ourtimes 000110 = \textbf{000001}.00000010$ & 1 \\
2 &   $ 000010 \ourtimes 0.00101011 = \textbf{000000}.01010110$ & 0  &  $ 0.01010110 \ourtimes 000110 = \textbf{000010}.00000100$ & 2 \\
3 &   $ 000011 \ourtimes 0.00101011 = \textbf{000000}.10000001$ & 0  &  $ 0.10000001 \ourtimes 000110 = \textbf{000011}.00000110$ & 3 \\
4 &   $ 000100 \ourtimes 0.00101011 = \textbf{000000}.10101100$ & 0  &  $ 0.10101100 \ourtimes 000110 = \textbf{000100}.00001000$ & 4 \\
5 &   $ 000101 \ourtimes 0.00101011 = \textbf{000000}.11010111$ & 0  &  $ 0.11010111 \ourtimes 000110 = \textbf{000101}.00001010$ & 5 \\
6 &   $ 000110 \ourtimes 0.00101011 = \textbf{000001}.00000010$ & 1  &  $ 0.00000010 \ourtimes 000110 = \textbf{000000}.00001100$ & 0 \\
$\vdots$ &  $\vdots$ & $\vdots$  &  $\vdots$ & $\vdots$ \\
17 &   $ 010001 \ourtimes 0.00101011 = \textbf{000010}.11011011$ & 2  &  $ 0.11011011 \ourtimes 000110 = \textbf{000101}.00100010$ & 5 \\
18 &   $ 010010 \ourtimes 0.00101011 = \textbf{000011}.00000110$ & 3  &  $ 0.00000110 \ourtimes 000110 = \textbf{000000}.00100100$ & 0 \\
19 &   $ 010011 \ourtimes 0.00101011 = \textbf{000011}.00110001$ & 3  &  $ 0.00110001 \ourtimes 000110 = \textbf{000001}.00100110$ & 1 \\
20 &   $ 010100 \ourtimes 0.00101011 = \textbf{000011}.01011100$ & 3  &  $ 0.01011100 \ourtimes 000110 = \textbf{000010}.00101000$ & 2 \\
21 &   $ 010101 \ourtimes 0.00101011 = \textbf{000011}.10000111$ & 3  &  $ 0.10000111 \ourtimes 000110 = \textbf{000011}.00101010$ & 3 \\
22 &   $ 010110 \ourtimes 0.00101011 = \textbf{000011}.10110010$ & 3  &  $ 0.10110010 \ourtimes 000110 = \textbf{000100}.00101100$ & 4 \\
23 &   $ 010111 \ourtimes 0.00101011 = \textbf{000011}.11011101$ & 3  &  $ 0.11011101 \ourtimes 000110 = \textbf{000101}.00101110$ & 5 \\
24 &   $ 011000 \ourtimes 0.00101011 = \textbf{000100}.00001000$ & 4  &  $ 0.00001000 \ourtimes 000110 = \textbf{000000}.00110000$ & 0 \\
$\vdots$ &  $\vdots$ & $\vdots$  &  $\vdots$ & $\vdots$ \\
63 &   $ 111111 \ourtimes 0.00101011 = \textbf{001010}.10010101$ & 10  &  $ 0.10010101 \ourtimes 000110 = \textbf{000011}.01111110$ & 3 \\

\bottomrule
\end{tabular}
\end{minipage}
\end{table}

While the use of the approximate reciprocal $\reciprocal$ prevents us from confirming divisibility by 6 by checking whether the fractional portion is exactly zero, we can still quickly determine whether a number is divisible by 6 ($n \bmod 6 = 0$?) by checking whether the fractional portion is less than the approximate reciprocal ($f < c$?).
Indeed, if $n = q \ourtimes d + r$ then the fractional portion of the product of $n$ with the approximate reciprocal 
should be close to $r/d$: it makes intuitive sense that comparing $r/d$ with $\reciprocal \approx 1/d$ determines whether the remainder $r$ is zero.
For example, consider $n=42$ (101010 in binary). We have that our numerator times the approximate reciprocal of 6 is $101010 \ourtimes 0.00101011 = 111.00001110$. We see that the quotient is 111 in binary ($q=7$ in decimal), while the fractional portion $f$ is smaller than the approximate reciprocal $\reciprocal$ ($0.00001110 < 0.00101011$), indicating that 42 is a multiple of 6.

In our example with 6 as the divisor, we used 8~fractional bits.  The more fractional bits we use, the
larger the numerator we can handle.
An insufficiency of fractional bits can lead to incorrect results when $n$ grows.
For instance, with $n=131$ ($10000011$ in binary) and only 8~fractional bits, $n$ times the approximate  reciprocal of
6  is  $10000011 \ourtimes 0.00101011 = 10110.00000001$, or 22 in decimal.
Yet in actuality, $131 \ourdiv 6 = 21$; using an 8~bit approximation of the reciprocal was inadequate.

How close does the approximation need to be?---that is, what is the minimum number of fractional bits needed for the approximate  reciprocal $\reciprocal$ such that the  remainder is exactly correct
for all numerators?
We derive the answer in \S~\ref{sec:remainderdirectly}.

The scenario we describe with an expensive division applies to current processors. Indeed, integer division instructions on recent x64 processors have a latency of 26~cycles for 32-bit registers and at least 35~cycles for 64-bit registers~\cite{fog2016instruction}.
We find similar latencies in the popular ARM processors~\cite{arma57}. Thus, most optimizing compilers replace
integer divisions by constants $d$ that are known at compile time with the equivalent of a multiplication by a scaled approximate reciprocal $\reciprocal$ followed by a shift.
To compute the remainder by a constant ($n \bmod d$), an optimizing compiler might first compute the quotient $n \ourdiv d$ as a multiplication by $\reciprocal$ followed by a logical shift by $F$~bits  $(  \reciprocal \ourtimes n ) \ourdiv 2^F$, and then use the fact that the remainder can be derived using a multiplication and a subtraction as $n \bmod d = n - (n \ourdiv d) \ourtimes d$.

Current optimizing compilers discard the fractional portion of the multiplication ($ (   \reciprocal \ourtimes n ) \bmod 2^F$).
Yet using the fractional bits to compute the remainder or test the divisibility in software has merit.
It can be faster (e.g., by more than 25\%) to  compute the remainder using the fractional bits compared  to the code produced for some processors (e.g., x64 processors)
by a state-of-the-art optimizing compiler.

\section{Related Work}

Jacobsohn\cite{jacobsohn1973combinatoric} derives an integer division method for unsigned integers by constant divisors.
After observing that any integer divisor can be written as an odd integer multiplied by a power of two, he focuses on the division by an odd divisor. He finds that we can divide by an odd integer  by multiplying by a fractional inverse, followed by some rounding. He presents an implementation solely with full adders, suitable for hardware implementations. He observes that we can get the remainder from the fractional portion with rounding, but he does not specify the necessary rounding or the number of bits needed.

In the special case where we divide by 10, Vowels\cite{Vowels:1992:D} describes the computation of both the quotient and remainder. In contrast with other related work, Vowels presents the computation of the remainder directly from the fractional portion. Multiplications are replaced by additions and shifts. He does not extend the work beyond the division by~10.

Granlund and Montgomery\cite{Granlund:1994:DII:773473.178249} present the first general-purpose algorithms to divide unsigned and signed integers by constants. Their approach relies on a multiplication followed by a division by a power of two which is implemented as an logical shift ($(\ourceiling*{2^F / d } \ourtimes n ) \ourdiv 2^F$).
 They implemented their approach  in the GNU GCC compiler, where it can still be found today (e.g., up to GCC version 7). Given any non-zero 32-bit divisor known at compile time, the optimizing compiler can (and usually will) replace the division by a multiplication followed by a shift.
Following Cavagnino and Werbrouck\cite{Cavagnino:2008:EAI:1388169.1388172}, Warren\cite{warr:hackers-delight-2e} finds that Granlund and Montgomery choose a slightly suboptimal
number of fractional bits for some divisors. Warren's slightly better approach is found in LLVM's Clang compiler. See Algorithm~\ref{algo:gmwalgo}.


\begin{algorithm}
\begin{algorithmic}[1]
\State \textbf{Require}: fixed integer divisor $d \in [1,2^{N})$
\State \textbf{Require}: runtime integer numerator $n \in [0,2^{N})$
\State \textbf{Compute}:  
the integer $n \ourdiv d$
\If{$d$ is a power of two ($d=2^K$)}
\State return $n\ourdiv 2^K$ \hfill \Comment{Implemented with a bitwise shift}
\ElsIf{for $L=\ourfloor*{\log_2(d)}$ and $\reciprocal=\ourceiling*{2^{N+L}/d}$, we have $\reciprocal \ourtimes (2^N- (2^N \bmod d) -1) < (2^N \ourdiv d) 2^{N+L} $ }
\State return $(\reciprocal \ourtimes n) \ourdiv 2^{N+L}$  \hfill\Comment{$\reciprocal\in[0,2^N)$}
\ElsIf{$d= 2^K d'$ for $K>0$}
\State let $L=\ourceiling*{\log_2(d')}$ and $\reciprocal=\ourceiling*{2^{N-K+L}/d'}$  \hfill\Comment{$\reciprocal\in[0,2^N)$}
\State return $ \reciprocal \ourtimes (n \ourdiv 2^K)   \ourdiv 2^{N-K+L}$
\Else
\State let  $L = \ourceiling*{\log_2 d }$ and  $\reciprocal=\ourceiling*{2^{N+L}/d }$ \hfill  \Comment{$\reciprocal>2^N$}
\State let $\reciprocal'$ be such that  $\reciprocal= 2^N + \reciprocal'$ \hfill \Comment{$\reciprocal'\in [0, 2^N)$}
\State return $(\ourfloor*{\reciprocal' \ourtimes n / 2^N } + ((n - \ourfloor*{\reciprocal' \ourtimes n / 2^N }) \ourdiv 2))\ourdiv 2^{L-1}$
\EndIf
\end{algorithmic}
\caption{Granlund-Montgomery-Warren\cite{Granlund:1994:DII:773473.178249,warr:hackers-delight-2e} division algorithm  by a constant using unsigned integers.  \label{algo:gmwalgo}}
\end{algorithm}

Following Artzy et al.\cite{Artzy:1976:FDT:359997.360013}, Granlund and Montgomery\cite{Granlund:1994:DII:773473.178249} describe how to check that an integer is a multiple of a constant divisor more cheaply than by the computation of the remainder. Yet, to our knowledge, no compiler uses this optimization. Instead, all compilers that we tested compute the  remainder $r$ by a constant  using the formula $r= n - q\ourtimes d$ and then compare against zero.  That is, they use a constant to compute the quotient, multiply the quotient by the original divisor, subtract from the original numerator, and only finally check whether the remainder is zero. 

In support of this approach, Granlund and Montgomery\cite{Granlund:1994:DII:773473.178249} state that \emph{the remainder, if desired, can be computed by an additional multiplication and subtraction}. Warren\cite{warr:hackers-delight-2e} covers the computation of the remainder without computing the quotient, but only for divisors that are a power of two $2^K$, or for small divisors that are nearly a power of two ($2^K +1$, $2^K -1$).

In software, to our knowledge, no authors except Jacobsohn\cite{jacobsohn1973combinatoric} and Vowels\cite{Vowels:1992:D} described using the fractional portion to compute the remainder or test the divisibility, and neither of these considered the general case. 
In contrast, the computation of the remainder directly, without first computing the quotient, has received some attention in the hardware and circuit literature\cite{331617,Doran1995,7933010}.
 Moreover, many researchers\cite{Rutten:2010:EFP:1731022.1731026,moller2011improved}
 consider the computation of the remainder of unsigned division by a small divisor to be useful when working with big integers (e.g., in cryptography).

\section{Computing the Remainder Directly}
\label{sec:remainderdirectly}

Instead of discarding the least significant bits resulting from the multiplication in the Granlund-Montgomery-Warren algorithm, we can use them to compute the remainder without ever computing the quotient.
We formalize this observation by Theorem~\ref{theorem:crazyass}, which we believe to be a novel mathematical result.

 In general, we expect that it takes at least $N$~\emph{fractional bits} for the approximate reciprocal $\reciprocal$ to provide exact computation of the remainder for all non-negative numerators less than $2^N$. Let us say we use $F=N+L$ fractional bits for some non-negative integer value $L$ to be determined. We want to pick $L$ so that approximate reciprocal $\reciprocal =\ourceiling*{2^F/d} = \ourceiling*{2^{N+L}/d}$ allows exact computation of the remainder as $r = \left ( \left ( \left ( \reciprocal \ourtimes n  \right ) \bmod 2^F \right )\ourtimes d \right ) \ourdiv 2^F$ where $F=N+L$.

We illustrate our notation using the division of $63$ by $6$ as in the last row of Table~\trimmedref{tab:example6}. We have that $63$ is 111111 in binary and that   the approximate reciprocal $\reciprocal$ of $6$ is $\ourceiling*{2^8/6}=00101011$ in binary.  We can compute the quotient as the integer part of the product of the reciprocal by 
\begin{align*} n \ourtimes \reciprocal = \overbrace{111111}^{N=6} \ourtimes 0.\underbrace{\overbrace{001010}^{N=6}\overbrace{11}^{L=2}}_{F=N+L=8}
= \overbrace{0001010}^{\mathrm{quotient:~}N\mathrm{~bits}}.\overbrace{10010101}^{( \reciprocal \ourtimes n   ) \bmod 2^F  }. \end{align*} 
Taking the $F$-bit fractional portion ($( \reciprocal \ourtimes n   ) \bmod 2^F$), and multiplying it by the divisor $6$ (110 in binary), we get the remainder as the integer portion of the result:
\begin{align*} (( \reciprocal \ourtimes n   ) \bmod 2^F  )\ourtimes d  =
\overbrace{10010101}^{( \reciprocal \ourtimes n   ) \bmod 2^F  } \ourtimes \overbrace{000110}^{d: N\mathrm{~bits}}=\overbrace{0000011}^{\mathrm{remainder:~}N\mathrm{~bits}}.\cdots.
\end{align*} 
The fractional portion $(\reciprocal \ourtimes n) \bmod 2^{F}$ given by 10010101 is relatively  close to the product of the reciprocal by the remainder ($00101011 \ourtimes 11$) given by 10000001: as we shall see, this is not an accident.

Indeed, we begin by showing that the $F=N+L$~least significant bits of the product ($(\reciprocal \ourtimes n) \bmod 2^{N+L}$) are approximately equal to the scaled approximate reciprocal $\reciprocal$ times the remainder we seek  ($\reciprocal \ourtimes (n\bmod d)$), in a way made precise by Lemma~\ref{lemma:interval}. Intuitively, this intermediate result is useful because we only need to multiply this product by $d$ and divide by $2^F$ to \emph{cancel out} $\reciprocal$ (since $\reciprocal \ourtimes d  \approx 2^F$) and get the remainder $n\bmod d$.

\begin{lemma}\label{lemma:interval} Given $d\in [1,2^N)$, and non-negative integers $\reciprocal, L $ such that \begin{eqnarray*}2^{N+L} \leq \reciprocal \ourtimes d \leq 2^{N+L} +2^L\end{eqnarray*} then
 \begin{eqnarray*}\reciprocal \ourtimes (n\bmod d) \leq (\reciprocal \ourtimes n) \bmod 2^{N+L} \leq \reciprocal \ourtimes (n\bmod d) + 2^{L} (n \ourdiv d) < 2^{N+L}\end{eqnarray*} for all $n \in [0,2^N)$.
\end{lemma}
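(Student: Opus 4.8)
The plan is to reduce everything to one exact expression for $\reciprocal \ourtimes n$ and then to control the size of its low-order part. Write $q = n \ourdiv d$ and $r = n \bmod d$, so that $n = q \ourtimes d + r$ with $0 \le r < d$, and abbreviate $F = N+L$. Since $\reciprocal$ and $d$ are integers, the hypothesis $2^{F} \le \reciprocal \ourtimes d \le 2^{F} + 2^{L}$ lets me write $\reciprocal \ourtimes d = 2^{F} + e$ for an integer $e$ with $0 \le e \le 2^{L}$. Substituting $n = q \ourtimes d + r$ then gives
\begin{equation*}
\reciprocal \ourtimes n = q \ourtimes (\reciprocal \ourtimes d) + \reciprocal \ourtimes r = q \ourtimes 2^{F} + (q \ourtimes e + \reciprocal \ourtimes r),
\end{equation*}
so that $(\reciprocal \ourtimes n) \bmod 2^{F}$ equals $(q \ourtimes e + \reciprocal \ourtimes r) \bmod 2^{F}$.

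The key observation is that if $q \ourtimes e + \reciprocal \ourtimes r < 2^{F}$, then no reduction occurs and $(\reciprocal \ourtimes n) \bmod 2^{F}$ is exactly $q \ourtimes e + \reciprocal \ourtimes r$. Both remaining inequalities of the lemma are then essentially free: since $q,e,r,\reciprocal$ are non-negative, $q \ourtimes e \ge 0$ yields the lower bound $\reciprocal \ourtimes r \le q \ourtimes e + \reciprocal \ourtimes r$, while $e \le 2^{L}$ yields $q \ourtimes e \le 2^{L} q$ and hence the middle bound $q \ourtimes e + \reciprocal \ourtimes r \le \reciprocal \ourtimes r + 2^{L}(n \ourdiv d)$. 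Because $q \ourtimes e \le 2^{L} q$, the no-wraparound condition itself is dominated by the final strict inequality of the lemma, so the entire statement follows from the single estimate $\reciprocal \ourtimes r + 2^{L} q < 2^{F}$.

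Establishing $\reciprocal \ourtimes r + 2^{L} q < 2^{F}$ is the step I expect to be the main obstacle. Here I would invoke the upper hypothesis in the form $\reciprocal \le (2^{F} + 2^{L})/d = 2^{L}(2^{N}+1)/d$. Multiplying by $r \ge 0$, adding $2^{L} q$, factoring out $2^{L}$, and using $q \ourtimes d + r = n$, the bound collapses to
\begin{equation*}
\reciprocal \ourtimes r + 2^{L} q \le 2^{L}\,\frac{(2^{N}+1)\,r + q \ourtimes d}{d} = 2^{L}\,\frac{2^{N} r + n}{d},
\end{equation*}
so it suffices to prove $2^{N} r + n < 2^{N} d$. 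This is exactly where both range constraints get used: from $r \le d-1$ we get $2^{N} r \le 2^{N} d - 2^{N}$, and since $n < 2^{N}$ strictly, adding the two gives $2^{N} r + n < 2^{N} d$. Tracing this back produces $\reciprocal \ourtimes r + 2^{L} q < 2^{L} \ourtimes 2^{N} = 2^{F}$, which closes the argument. The only subtlety to watch is keeping the final inequality strict, which it is precisely because the numerator bound $n < 2^{N}$ is strict.
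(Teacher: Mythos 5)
Your proof is correct and follows essentially the same route as the paper's: both hinge on the key estimate $\reciprocal \ourtimes r + 2^{L} q < 2^{N+L}$ (proved by the identical computation via $2^{N}r + n < 2^{N}d$) and on identifying $(\reciprocal \ourtimes n) \bmod 2^{N+L}$ with $\reciprocal \ourtimes n - q \ourtimes 2^{N+L}$ once that quantity is shown to lie in $[0, 2^{N+L})$. Your explicit parametrization $\reciprocal \ourtimes d = 2^{N+L} + e$ with $0 \le e \le 2^{L}$ is just a cleaner rewriting of the paper's step of multiplying $0 \le \reciprocal \ourtimes d - 2^{N+L} \le 2^{L}$ by $q$ and adding $\reciprocal \ourtimes r$.
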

\begin{proof}
We can write  $n$ uniquely as $n = q \ourtimes d + r$ for some integers $q$ and $r$ where $q\geq 0$ and $r \in [0,d)$. We assume that $2^{N+L} \leq \reciprocal \ourtimes d \leq 2^{N+L} +2^L$.

We begin by showing that
$\reciprocal\ourtimes r + 2^{L} q < 2^{N+L}$.
Because $\reciprocal\ourtimes d \leq 2^{N+L} +2^L$, we have that
\begin{eqnarray*}
\reciprocal \ourtimes r + 2^{L} q & \leq & \frac{2^{N+L}}{d} r + \frac{2^L}{d} r + 2^{L} q \\
 & = & \frac{2^L}{d}\left ( 2^N r +   r + d \ourtimes q \right ) \\
& = & \frac{2^L}{d}\left(n + 2^N r \right ).\end{eqnarray*} Because $n<2^N$ and $r<d$, we have that $n + 2^N r < 2^N d$ which shows that
\begin{equation}
\reciprocal \ourtimes r + 2^{L} q < 2^{N+L}. \label{eq:firstinequality}
\end{equation}

We can rewrite our assumption $2^{N+L}\leq \reciprocal \ourtimes d \leq 2^{N+L} +2^L$ as $0\leq \reciprocal \ourtimes d - 2^{N+L} \leq  2^L$. Multiplying throughout by the non-negative integer $q$, we get
\begin{eqnarray*}
0 \leq \reciprocal \ourtimes d \ourtimes q  - 2^{N+L} q \leq  2^L q.
\end{eqnarray*}
After adding $\reciprocal \ourtimes r$ throughout, we get
\begin{eqnarray*}
\reciprocal \ourtimes r \leq \reciprocal \ourtimes n - 2^{N+L} q \leq  2^L q + \reciprocal \ourtimes r
\end{eqnarray*}
where we used the fact that $\reciprocal \ourtimes d \ourtimes q  + \reciprocal \ourtimes r = \reciprocal \ourtimes n$.
So we have that $\reciprocal \ourtimes n - 2^{N+L} q \in [\reciprocal \ourtimes r, 2^L q + \reciprocal \ourtimes r]$.
We already showed (see Equation~\ref{eq:firstinequality})
that $2^L q + \reciprocal \ourtimes r$  is less than $2^{N+L}$ so that $ \reciprocal \ourtimes n - 2^{N+L} q \in [0,2^{N+L})$.
Thus we have that  $\reciprocal \ourtimes n - 2^{N+L} q = ( \reciprocal \ourtimes n )\bmod 2^{N+L}$
because (in general and by definition) if $ p - k Q \in [0, y)$ for some $y\leq Q$,  then $p \bmod Q = p - k Q $.
Hence, we have that $( \reciprocal \ourtimes n )\bmod 2^{N+L} \in [\reciprocal \ourtimes r, 2^L q + \reciprocal \ourtimes r]$.
This completes the proof.
\end{proof}

Lemma~\ref{lemma:interval} tells us that $(\reciprocal \ourtimes n) \bmod 2^{N+L}$ is close to $\reciprocal \ourtimes (n\bmod d)$ when $\reciprocal$ is close to $2^{N+L}/d$. Thus it should make intuitive sense that $(\reciprocal \ourtimes n) \bmod 2^{N+L}$ multiplied by $d/2^{N+L}$ should give us $n\bmod d$. The following theorem makes the result precise.

\begin{theorem}\label{theorem:crazyass} Given $d\in [1,2^N)$, and non-negative integers $\reciprocal, L $ such that
\begin{eqnarray*}\frac{1}{d} \leq \frac{\reciprocal}{2^{N+L}} \leq \frac{1}{d}
+ \frac{1/d}{2^{N}}\end{eqnarray*} 
then
\begin{eqnarray*}n \bmod d = \left (\left((\reciprocal \ourtimes n) \bmod 2^{N+L}\right) \ourtimes d \right) \ourdiv 2^{N+L}\end{eqnarray*} for all $n \in [0,2^N)$.
\end{theorem}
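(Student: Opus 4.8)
The plan is to recognize that the theorem's hypothesis is, after clearing denominators, precisely the hypothesis of Lemma~\ref{lemma:interval}, and then to convert the lemma's two-sided estimate on $(\reciprocal \ourtimes n) \bmod 2^{N+L}$ into the desired floor identity. First I would multiply the chain $\frac{1}{d} \leq \frac{\reciprocal}{2^{N+L}} \leq \frac{1}{d} + \frac{1/d}{2^{N}}$ through by the positive quantity $2^{N+L}$ and then by $d$; this yields exactly $2^{N+L} \leq \reciprocal \ourtimes d \leq 2^{N+L} + 2^{L}$, so Lemma~\ref{lemma:interval} applies verbatim.

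Next, writing $n = q \ourtimes d + r$ with $q = n \ourdiv d \geq 0$ and $r = n \bmod d \in [0,d)$, and setting $m = (\reciprocal \ourtimes n) \bmod 2^{N+L}$, the lemma gives $\reciprocal \ourtimes r \leq m \leq \reciprocal \ourtimes r + 2^{L} q$. Since the claimed right-hand side is $(m \ourtimes d) \ourdiv 2^{N+L} = \ourfloor*{m \ourtimes d / 2^{N+L}}$, it suffices to sandwich $m \ourtimes d$ by proving the two inequalities $r \ourtimes 2^{N+L} \leq m \ourtimes d < (r+1) \ourtimes 2^{N+L}$; the integrality of $r$ then forces the floor to equal $r = n \bmod d$.

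To obtain this sandwich I would multiply the lemma's bounds by $d$ and substitute $\reciprocal \ourtimes d = 2^{N+L} + \epsilon$ with $0 \leq \epsilon \leq 2^{L}$, which is exactly the reduced hypothesis. The lower bound becomes $m \ourtimes d \geq \reciprocal \ourtimes r \ourtimes d = r \ourtimes 2^{N+L} + r \epsilon \geq r \ourtimes 2^{N+L}$, which is the left inequality. The upper bound becomes $m \ourtimes d \leq r \ourtimes 2^{N+L} + r\epsilon + 2^{L} q \ourtimes d$, so the whole argument reduces to showing that the error term $r\epsilon + 2^{L} q \ourtimes d$ stays strictly below $2^{N+L}$.

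I expect this last estimate to be the crux. Using $\epsilon \leq 2^{L}$ and the identity $q \ourtimes d = n - r$, the error term telescopes: $r\epsilon + 2^{L}(n - r) \leq 2^{L} r + 2^{L} n - 2^{L} r = 2^{L} n$, and $n < 2^{N}$ gives $2^{L} n < 2^{N+L}$. This exact cancellation---where the $2^{L} r$ contributed by $r\epsilon$ is annihilated by the $-2^{L} r$ coming from $q \ourtimes d = n-r$---is what pins down the precision $L$ and is the only place the full strength of the hypothesis is used. With $m \ourtimes d < (r+1) \ourtimes 2^{N+L}$ established, combining with the lower bound yields $\ourfloor*{m \ourtimes d / 2^{N+L}} = r = n \bmod d$, completing the proof.
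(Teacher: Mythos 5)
Your proof is correct and takes essentially the same route as the paper's: clear denominators to recover the hypothesis of Lemma~\ref{lemma:interval}, then sandwich $\left(\left((\reciprocal \ourtimes n) \bmod 2^{N+L}\right) \ourtimes d\right)$ inside $[2^{N+L}\, r,\; 2^{N+L}(r+1))$ using the lemma's two bounds, with exactly the same cancellation $2^L r + 2^L(n-r) = 2^L n < 2^{N+L}$ clinching the upper bound. Your substitution $\reciprocal \ourtimes d = 2^{N+L} + \epsilon$ with $0 \leq \epsilon \leq 2^L$ is only a notational variant of the paper's direct use of the inequality $\reciprocal \ourtimes d \leq 2^{N+L} + 2^L$.
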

\begin{proof}
We can write $n$ uniquely  as $n = q \ourtimes d + r$ where $q\geq 0$ and $r \in [0,d)$.
By Lemma~\ref{lemma:interval},
we have that
 $\reciprocal \ourtimes n \bmod 2^{N+L} \in [\reciprocal \ourtimes r, \reciprocal \ourtimes r +2^L q]$ for all $n \in [0,2^N)$.

We want to show that if we multiply any value in $[\reciprocal \ourtimes r, \reciprocal \ourtimes r +2^L q]$ by $d$ and divide it by $2^{N+L}$, then we get $r$.
That is, if $y\in [\reciprocal \ourtimes r, \reciprocal \ourtimes r +2^L q]$, then $d\ourtimes y  \in [2^{N+L} r, 2^{N+L} (r+1))$. We can check this inclusion using two inequalities:
\begin{itemize}
\item ($d\ourtimes y \geq 2^{N+L} r$) It is enough to show that
$\reciprocal \ourtimes d \ourtimes r \geq 2^{N+L} r$ which follows since $\reciprocal \ourtimes d \geq 2^{N+L}$ by one of our assumptions.
\item ($d\ourtimes y < 2^{N+L} (r+1)$) It is enough to show that
$d \ourtimes ( \reciprocal \ourtimes r +2^L q) < 2^{N+L} (r+1)$.
Using the assumption that $\reciprocal \ourtimes d \leq 2^{N+L}+ 2^L$, we have
that $d \ourtimes ( \reciprocal \ourtimes r +2^L q) \leq 2^{N+L} r + 2^L r + 2^L  d \ourtimes q
= 2^{N+L} r + 2^L n$. Since $n < 2^N$, we finally have
$d \ourtimes ( \reciprocal \ourtimes r +2^L q)<2^{N+L} (r+1)$ as required.
\end{itemize}
This concludes the proof.
\end{proof}

Consider the constraint $2^{N+L} \leq \reciprocal \ourtimes d \leq 2^{N+L} +2^L$ given by Theorem~\ref{theorem:crazyass}.
\begin{itemize}
\item We have that $\reciprocal= \ourceiling*{2^{N+L}/d}$ is the smallest value of $\reciprocal$ satisfying $2^{N+L} \leq \reciprocal \ourtimes d $.
\item Furthermore, when $d$ does not divide $2^{N+L}$, we have that $\ourceiling*{2^{N+L}/d} \ourtimes d = 2^{N+L} +  d-(2^{N+L} \bmod d)$
and so $\reciprocal \ourtimes d \leq 2^{N+L} +2^L$ implies $d \leq (2^{N+L} \bmod d)+2^L$. See Lemma~\ref{lemma:technicalstuff2}.
\end{itemize}
On this basis,
Algorithm~\ref{algo:bigsummary} gives  the minimal number of fractional bits $F$.
It is sufficient to pick $F\geq  N + \log_2(d)$.

\begin{lemma}\label{lemma:technicalstuff2}Given a divisor $d\in [1,2^N)$, if we set $\reciprocal=\ourceiling*{2^{N+L}/d}$, then

\begin{itemize}
\item  $\reciprocal \ourtimes d = 2^{N+L} +d -(2^{N+L} \bmod d)  $
 when $d$ is not a power of two,
\item and $\reciprocal \ourtimes d = 2^{N+L}$ when $d$ is a power of two.
\end{itemize}
\end{lemma}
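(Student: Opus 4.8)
The plan is to reduce everything to the Euclidean division identity $2^{N+L} = d \ourtimes \ourfloor*{2^{N+L}/d} + (2^{N+L}\bmod d)$, with $0 \leq 2^{N+L}\bmod d < d$, and then split on whether $d$ divides $2^{N+L}$. The two bullets of the statement are really one case analysis governed by whether $2^{N+L}\bmod d = 0$; the bulk of the work is bookkeeping with the ceiling function.

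First I would handle the arithmetic bridge. When $2^{N+L}\bmod d \neq 0$, the ceiling exceeds the floor by exactly one, so $\reciprocal = \ourceiling*{2^{N+L}/d} = \ourfloor*{2^{N+L}/d} + 1$; multiplying by $d$ and substituting the Euclidean identity gives $\reciprocal \ourtimes d = 2^{N+L} - (2^{N+L}\bmod d) + d$, which is the first bullet. When $2^{N+L}\bmod d = 0$, the ceiling, the floor, and $2^{N+L}/d$ all coincide, so $\reciprocal \ourtimes d = 2^{N+L}$, the second bullet.

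The one step that needs care is matching these two cases to the dichotomy \emph{power of two versus not a power of two} as the lemma phrases it. I would argue that $d \mid 2^{N+L}$ holds if and only if $d$ is a power of two: any divisor of a power of two is itself a power of two, and conversely every $d = 2^k$ with $1 \leq d < 2^N$ satisfies $k < N \leq N+L$ and hence divides $2^{N+L}$. This equivalence is where the hypothesis $d \in [1,2^N)$ and the specific form $2^{N+L}$ of the dividend are used; it guarantees the ``$2^{N+L}\bmod d = 0$'' case is exactly the power-of-two case.

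I expect no genuine obstacle here. The result is an elementary consequence of how the ceiling interacts with Euclidean division, and the hardest part is merely verifying the boundary behavior of the power-of-two characterization carefully enough that the split lines up verbatim with the statement.
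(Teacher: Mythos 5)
Your proof is correct and follows essentially the same route as the paper's: both compute $\ourceiling*{2^{N+L}/d}\ourtimes d$ by rounding $2^{N+L}$ up to the next multiple of $d$, splitting on whether $d$ divides $2^{N+L}$. Your version is slightly more careful than the paper's, which dispatches the power-of-two case ``by inspection'' and leaves implicit the fact you prove explicitly---that for $d\in[1,2^N)$, divisibility of $2^{N+L}$ by $d$ holds exactly when $d$ is a power of two.
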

\begin{proof}
The case when $d$ is a power of two follows by inspection, so suppose that $d$ is not a power of two.
 We seek to round $2^{N+L}$ up to the next multiple of $d$.    
 The previous multiple of $d$ is smaller than $2^{N+L}$ by  $2^{N+L} \bmod d$.
  Thus we need to add $d - 2^{N+L} \bmod d$ to  $2^{N+L}$ to get the next multiple of $d$.
 \end{proof}

\begin{example}
Consider Table~\trimmedref{tab:example6} where we divide by  $d=6$ and we want to support the numerators $n$ between 0 and $64 = 2^6$ so that $N=6$.
It is enough to pick $F\geq  N + \log_2(d) \approx 8.58 $ or $F=9$ but we can do better.
According to Algorithm~\ref{algo:bigsummary}, the number of fractional bits $F=N+L=6+L$ must satisfy $d \leq (2^{6+L} \bmod d)+2^L$. 
Picking $L=0$ and $F=6$ does not work
since $2^6 \bmod 6 + 1 = 5$. Picking $L=1$ also does not work since $2^7 \bmod 6 + 2 = 4$.
Thus we need $L=2$ and $F=8$, at least. So we can pick $\reciprocal = \ourceiling*{2^8 / 6} = 43$. In binary, representing 43 with 
8 fractional bits gives $0.00101011$, as in Table~\ref{tab:example6}.
Let us divide $63$ by $6$. The quotient is $(63 \ourtimes 43) \ourdiv 2^8 = 10$.
The remainder is $( ( (63 \ourtimes 43) \bmod 2^8 ) \ourtimes 6) \ourdiv 2^8 = 3$.
\end{example}

It is not always best to use the smallest number of fractional bits.
For example, we can always conveniently pick $F=2\ourtimes N$ (meaning $L=N$) and $\reciprocal= \ourceiling*{2^{2 N}/d}$,
since $d \leq 2^F \bmod d + 2^N$ clearly holds (given $d \leq 2^N$).

\begin{algorithm}
\begin{algorithmic}[1]
\State \textbf{Require}: fixed integer divisor $d \in [1,2^{N})$
\State We seek the smallest number of fractional bits $F$ such that for
any integer numerator $n \in [0,2^{N})$, the  remainder is $r = \left ( \left ( \left ( \reciprocal \ourtimes n  \right ) \bmod 2^F \right )\ourtimes d \right ) \ourdiv 2^F$ for some scaled approximate reciprocal $\reciprocal$.
\State We can always choose the scaled approximate reciprocal $\reciprocal \leftarrow \ourceiling*{2^F/d}$.
\If{$d$ is a power of two}
\State Let $F\leftarrow \log_2(d)$ and $\reciprocal=1$.
\Else
\State Let $F\leftarrow N + L$ where $L$ is the smallest integer such that $d \leq (2^{N+L} \bmod d)+2^L$.
\EndIf
\end{algorithmic}
\caption{Algorithm to select the number of fractional bits and the scaled approximate reciprocal in the case of unsigned integers.  \label{algo:bigsummary}}
\end{algorithm}

\subsection{Directly Computing Signed Remainders}
\label{sec:directlycomputingsignedremainders}

Having established that we could compute the remainder in the unsigned case directly, without first computing the quotient, we proceed to establish the same in the signed integer case.
We assume throughout that the processor represents signed integers in $[-2^{N-1}, 2^{N-1})$ using the two's complement notation. We assume that the integer $N\geq 1$.

Though the quotient and the remainder have a unique definition over positive integers, there are several valid ways to define them over signed integers. We adopt a convention regarding signed integer arithmetic that is widespread among modern computer languages, including C99, Java, C\#, Swift, Go, and Rust.
Following Granlund and Montgomery~\cite{Granlund:1994:DII:773473.178249}, we let $\trunc(v) $ be $v$ rounded towards zero: it is $\ourfloor*{v}$ if $v\geq 0$ and $\ourceiling*{v}$ otherwise. 
We use ``$\signeddiv$'' to denote the signed integer division defined as $n \signeddiv d \equiv \trunc(n/d)$
and ``$\signedmod$'' to denote the signed integer remainder defined by the identity  $n \signedmod d \equiv  n - \trunc(n/d) \ourtimes d$.
Changing the sign of the divisor changes the sign of the quotient but the remainder is insensitive to the sign of the divisor. Changing the sign of the numerator changes the sign of both the quotient and the remainder:  $(-n) \signeddiv d = -(n \signeddiv d)$   and  $(-n) \signedmod d = -(n \signedmod d)$.

Let $\lsb_K(n)$ be the function that selects the $K$~least significant bits of an integer $n$, zeroing others. The result is always non-negative (in $[0, 2^K)$) in our work.
Whenever $2^K$ divides $n$, whether $n$ is positive or not, we have  $\lsb_K(n) = 0 = n \bmod 2^K$. 

\begin{remark}\label{remark:clever}
Suppose $2^K$ does not divide $n$, then $ n \bmod 2^K =  2^K - \lsb_K(n)$ when the integer $n$ is negative, and $ n \bmod 2^K = \lsb_K(n)$ when it is positive.  Thus we have $\lsb_K(n)+\lsb_K(-n) = 2^K$ whenever  $2^K$ does not divide $n$.
\end{remark}

We establish a few technical results before proving Theorem~\ref{theorem:bettersigned}. We believe it is novel.

\begin{lemma}\label{lemma:moreclever}Given $d\in [1,2^{N-1})$, and non-negative integers $\reciprocal, L $ such that \begin{eqnarray*}2^{N-1+L} < \reciprocal \ourtimes d < 2^{N-1+L} +2^L,\end{eqnarray*}
we have that  $2^{N-1+L}$ cannot divide $\reciprocal \ourtimes n$ for any $n\in [-2^{N-1},0)$.
 \end{lemma}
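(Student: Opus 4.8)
The plan is to argue by contradiction. Suppose $2^{N-1+L}$ divides $\reciprocal \ourtimes n$ for some $n \in [-2^{N-1},0)$. Since divisibility by a power of two is insensitive to sign, I would replace $n$ by $m = -n$, which ranges over the positive integers $[1,2^{N-1}]$, and work instead with the hypothesis that $2^{N-1+L}$ divides $\reciprocal \ourtimes m$. Writing $\reciprocal \ourtimes m = k \ourtimes 2^{N-1+L}$ for some integer $k$, I would first verify that $k \geq 1$: the lower bound $2^{N-1+L} < \reciprocal \ourtimes d$ together with $d \geq 1$ forces $\reciprocal \geq 1$, and $m \geq 1$, so $\reciprocal \ourtimes m$ is a strictly positive multiple of $2^{N-1+L}$.

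Next I would exploit the tight two-sided bound on $\reciprocal \ourtimes d$. Multiplying the strict inequalities $2^{N-1+L} < \reciprocal \ourtimes d < 2^{N-1+L}+2^L$ through by the positive integer $m$ turns the middle term $\reciprocal \ourtimes d \ourtimes m$ into $d \ourtimes k \ourtimes 2^{N-1+L}$, upon substituting $\reciprocal \ourtimes m = k \ourtimes 2^{N-1+L}$. Dividing the resulting chain of inequalities by the positive quantity $2^{N-1+L}$ then yields $m < d \ourtimes k < m + m/2^{N-1}$, an elementary manipulation that isolates the integer $d \ourtimes k$ inside a narrow window around $m$.

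Finally, since $m \leq 2^{N-1}$ the right-hand margin $m/2^{N-1}$ is at most $1$, so $d \ourtimes k$ would be an integer lying strictly between the consecutive integers $m$ and $m+1$, which is impossible; this contradiction completes the proof. The one step requiring genuine care is the boundary case $m = 2^{N-1}$, where $m/2^{N-1}$ equals exactly $1$: here the argument survives precisely because the upper hypothesis $\reciprocal \ourtimes d < 2^{N-1+L}+2^L$ is \emph{strict}, keeping $d \ourtimes k < m+1$ strict so that the no-integer-in-between contradiction still applies. I expect this strictness bookkeeping to be the only delicate point, with everything else being a direct rearrangement of the hypothesis.
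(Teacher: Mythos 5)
Your proof is correct, but it takes a genuinely different route from the paper's. The paper argues via prime factorizations: it first shows $2^L$ cannot divide $\reciprocal$ (else, writing $\reciprocal = \alpha 2^L$, the integer $\alpha \ourtimes d$ would lie strictly between $2^{N-1}$ and $2^{N-1}+1$), and then counts factors of two in $\reciprocal \ourtimes n$: since $|n| \leq 2^{N-1}$ can contribute at most $N-1$ factors of two, divisibility by $2^{N-1+L}$ would force $n = -2^{N-1}$ with $\reciprocal$ supplying the remaining $L$ factors, contradicting the first step. Your argument bypasses the valuation bookkeeping entirely: writing $\reciprocal \ourtimes m = k \ourtimes 2^{N-1+L}$ and multiplying the hypothesis through by $m$ squeezes the integer $d \ourtimes k$ strictly between $m$ and $m+1$, a contradiction valid uniformly for all $m \in [1, 2^{N-1}]$. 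In effect you have generalized the paper's first step---which is exactly your squeeze specialized to $m = 2^{N-1}$, since $\reciprocal \ourtimes 2^{N-1} = k \ourtimes 2^{N-1+L}$ is the same as $2^L \mid \reciprocal$---so that it covers every $m$ at once, making the second, case-analytic step unnecessary. What your route buys is a shorter, unified argument that avoids a detail the paper glosses over (for $|n| < 2^{N-1}$ the numerator supplies even fewer factors of two, so $\reciprocal$ would need strictly more than $L$ of them; the paper's phrase ``only possible when $n=-2^{N-1}$'' leaves this implicit); what the paper's route buys is structural insight into why the lemma holds, namely that the hypothesis forbids the reciprocal from being too divisible by two. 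Your attention to the boundary case $m = 2^{N-1}$, where strictness of the upper bound is what saves the argument, is exactly right; note that strictness of the lower bound is equally essential, since it is what rules out $d \ourtimes k = m$.
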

\begin{proof}
First, we prove that  $2^{L}$ cannot divide $\reciprocal$.
When $2^L$ divides $\reciprocal$, setting $\reciprocal=\alpha 2^{L}$ for some integer $\alpha$, we have that  
$2^{N-1} < \alpha  d < 2^{N-1} +1$, but there is no integer between 
$2^{N-1}$ and $2^{N-1} +1$.

Next, suppose that $n\in [-2^{N-1},0)$ and $2^{N-1+L}$ divides $\reciprocal \ourtimes n$. 
Since $2^{N-1+L}$ divides $\reciprocal \ourtimes n$, we know that  the prime factorization of $\reciprocal \ourtimes n$ has at least $N-1+L$
copies of 2.   Within the range of $n$ ($[-2^{N-1},0)$) at most $N-1$ copies of 2 can be provided.
Obtaining the required $N-1+L$ copies of 2 is only possible when $n=-2^{N-1}$ and $\reciprocal$ provides
the remaining copies---so $2^L$ divides $\reciprocal$. But that is impossible.
\end{proof}

\begin{lemma}\label{lemma:nofun} Given $d\in [1,2^{N-1})$, and non-negative integers $\reciprocal, L $ such that \begin{eqnarray*}2^{N-1+L} < \reciprocal \ourtimes d < 2^{N-1+L} +2^L,\end{eqnarray*} 
for all integers $n\in (0,2^{N-1}]$ and letting
$y = \lsb_{N-1+L}(\reciprocal \ourtimes n)$, 
we have that  $  (y \ourtimes d) \bmod  2^{N-1+L}  >0$.

 \end{lemma}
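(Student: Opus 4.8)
The plan is to avoid the interval estimates used for the unsigned case and instead reduce the claim to a one-line bound. Write $F = N-1+L$. First I would note that the hypothesis $\reciprocal \ourtimes d > 2^{F}$ forces $\reciprocal > 0$, so that $\reciprocal \ourtimes n > 0$ for every $n \in (0, 2^{N-1}]$. By Remark~\ref{remark:clever} applied to the positive integer $\reciprocal \ourtimes n$, we have $y = \lsb_{F}(\reciprocal \ourtimes n) = (\reciprocal \ourtimes n) \bmod 2^{F}$, hence $y \equiv \reciprocal \ourtimes n \pmod{2^{F}}$. Multiplying this congruence by $d$ gives $(y \ourtimes d) \bmod 2^{F} = (\reciprocal \ourtimes n \ourtimes d) \bmod 2^{F}$, so it is enough to prove that $2^{F}$ does not divide $\reciprocal \ourtimes n \ourtimes d$.

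Next I would introduce the integer $s := \reciprocal \ourtimes d - 2^{F}$. The strict two-sided hypothesis $2^{F} < \reciprocal \ourtimes d < 2^{F} + 2^{L}$ says exactly that $1 \leq s \leq 2^{L}-1$. Substituting $\reciprocal \ourtimes d = 2^{F} + s$ yields $\reciprocal \ourtimes n \ourtimes d = n \ourtimes (2^{F} + s) \equiv n \ourtimes s \pmod{2^{F}}$, so the quantity of interest collapses to $(y \ourtimes d) \bmod 2^{F} = (n \ourtimes s) \bmod 2^{F}$.

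The final step is an elementary size bound on $n \ourtimes s$. Since $1 \leq n \leq 2^{N-1}$ and $1 \leq s \leq 2^{L}-1$, I would bound $0 < n \ourtimes s \leq 2^{N-1}(2^{L}-1) = 2^{F} - 2^{N-1} < 2^{F}$. Thus $n \ourtimes s$ is a positive integer strictly below $2^{F}$, so $(n \ourtimes s) \bmod 2^{F} = n \ourtimes s > 0$, which is precisely the assertion.

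I would flag the reduction $(y \ourtimes d) \bmod 2^{F} = (n \ourtimes s) \bmod 2^{F}$ as the crux rather than any computation. It rests on two points: the positivity of $n$, which is what makes $\lsb_{F}$ agree with $\bmod\,2^{F}$ (and is exactly the feature absent in the negative range handled by Lemma~\ref{lemma:moreclever}), and the strict upper bound $\reciprocal \ourtimes d < 2^{F}+2^{L}$, which keeps $s < 2^{L}$ and hence $n \ourtimes s < 2^{F}$. Once the congruence is established there is nothing left to do; in particular, and unlike the unsigned Lemma~\ref{lemma:interval}, the argument never needs to reconstruct the quotient $q$, because the multiple of $2^{F}$ hidden inside $\reciprocal \ourtimes n \ourtimes d$ is discarded by the modular reduction.
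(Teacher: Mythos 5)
Your proof is correct, and it takes a genuinely different route from the paper's. The paper proves the lemma by reusing Lemma~\ref{lemma:interval} (with $N$ replaced by $N-1$): writing $n = q \ourtimes d + r$, it places $y$ in $[\reciprocal \ourtimes r, \reciprocal \ourtimes r + 2^L q]$ and then, as in Theorem~\ref{theorem:crazyass}, shows that $d \ourtimes y \in (2^{N-1+L}\,r,\; 2^{N-1+L}(r+1))$, so that $d \ourtimes y$ falls strictly between consecutive multiples of $2^{N-1+L}$. You instead work modulo $2^{N-1+L}$ directly: with $s = \reciprocal \ourtimes d - 2^{N-1+L}$, the strict hypotheses give $1 \leq s \leq 2^L-1$, and the congruence $y \ourtimes d \equiv \reciprocal \ourtimes n \ourtimes d \equiv n \ourtimes s \pmod{2^{N-1+L}}$ together with the size bound $0 < n \ourtimes s \leq 2^{N-1}(2^L-1) < 2^{N-1+L}$ yields $(y \ourtimes d) \bmod 2^{N-1+L} = n \ourtimes s > 0$. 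Your route buys several things: it gives the exact value $n \ourtimes s$ rather than mere positivity; it needs neither the quotient/remainder decomposition nor even the hypothesis $d < 2^{N-1}$; and it quietly sidesteps two boundary subtleties in the paper's own argument --- Lemma~\ref{lemma:interval} is invoked at $n = 2^{N-1}$, which lies outside its stated range $[0,2^{N-1})$ after substitution, and the paper's strict lower bound $d \ourtimes y \geq \reciprocal \ourtimes d \ourtimes r > 2^{N-1+L}\,r$ degenerates to the false claim $0 > 0$ when $r = 0$, a case your bound covers uniformly (and a case that genuinely arises in Theorem~\ref{theorem:bettersigned}, e.g.\ for $n = -d$). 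What the paper's approach buys in exchange is continuity of method: it exhibits $\ourfloor*{(y \ourtimes d)/2^{N-1+L}} = r$, visibly tying the lemma to the remainder formula that Theorem~\ref{theorem:bettersigned} requires, whereas your argument establishes exactly, and only, what the lemma asserts.
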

\begin{proof}
Write $n= q \ourtimes d + r$ where $r\in [0,d)$.

Since $\reciprocal \ourtimes n$ is positive 
%
we have that $y = (\reciprocal \ourtimes n) \bmod 2^{N-1+L} = \lsb_{N-1+L}(\reciprocal \ourtimes n)$. (See Remark~\ref{remark:clever}.)

Lemma~\ref{lemma:interval} is applicable, replacing $N$ with $N-1$. We have a stronger constraint on $\reciprocal \ourtimes d$, but that is not harmful. 
Thus we have $y \in [ \reciprocal \ourtimes r, \reciprocal \ourtimes r + 2^L q]$.

We proceed as in the proof of  Theorem~\ref{theorem:crazyass}.
We want to show that $d\ourtimes y \in (2^{N-1+L} r, 2^{N-1+L} (r+1))$. 
\begin{itemize}
\item ($d\ourtimes y > 2^{N-1+L} r$) We have $	y\geq \reciprocal \ourtimes r$. Multiplying throughout by $d$, we get
$d\ourtimes y  \geq \reciprocal \ourtimes d \ourtimes r> 2^{N-1+L} r$, where we used $ \reciprocal \ourtimes d > 2^{N-1+L}$ in the last inequality.
\item ($d\ourtimes y  < 2^{N-1+L} (r+1)$) We have $	y\leq  \reciprocal \ourtimes r + 2^L q$.
 Multiplying throughout by $d$, we get
$d\ourtimes y \leq  \reciprocal \ourtimes  d \ourtimes r+ 2^L q \ourtimes d < 2^{N-1+L} r + 2^L n$. Because $n \leq 2^{N-1}$, we have the result $d\ourtimes y  < 2^{N-1+L} (r+1)$.
\end{itemize}

Thus we have 
$d \ourtimes y \in (2^{N-1+L} r, 2^{N-1+L} (r+1))$, which shows that  $  (y \ourtimes d) \bmod  2^{N-1+L}  >0$.
 This completes the proof.
\end{proof}

\begin{lemma}\label{lemma:rarara} 

Given positive integers $a, b, d$, we have that
$\ourfloor*{( b - a) \ourtimes d / b }
= d - 1 - 
\ourfloor*{a \ourtimes d / b }$
if $b$ does not divide $a \ourtimes d $.
 \end{lemma}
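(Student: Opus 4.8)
The plan is to rewrite the left-hand side by pulling the integer part $d$ out of the floor and then convert the remaining floor of a negative quantity into a ceiling, at which point the non-divisibility hypothesis lets me replace that ceiling by a floor plus one. Concretely, I would first observe the algebraic identity
\begin{eqnarray*}
\frac{(b-a)\ourtimes d}{b} = \frac{b\ourtimes d - a \ourtimes d}{b} = d - \frac{a \ourtimes d}{b},
\end{eqnarray*}
so that $\ourfloor*{(b-a)\ourtimes d/b} = \ourfloor*{d - a\ourtimes d/b}$.

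Next I would apply the standard fact that $\ourfloor*{m + x} = m + \ourfloor*{x}$ for any integer $m$ (here $m=d$), giving $\ourfloor*{d - a\ourtimes d/b} = d + \ourfloor*{-a\ourtimes d/b}$. The second ingredient is the reflection identity $\ourfloor*{-x} = -\ourceiling*{x}$, which turns this into $d - \ourceiling*{a\ourtimes d/b}$. So far everything holds for arbitrary positive $a,b,d$, with no use of the hypothesis.

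The hypothesis enters only at the final step: when $b$ does not divide $a \ourtimes d$, the quotient $a\ourtimes d / b$ is not an integer, and for a non-integer $x$ we have $\ourceiling*{x} = \ourfloor*{x} + 1$. Substituting this yields $d - \ourceiling*{a\ourtimes d/b} = d - \ourfloor*{a\ourtimes d/b} - 1$, which is exactly the claimed right-hand side $d - 1 - \ourfloor*{a\ourtimes d/b}$.

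I do not anticipate a genuine obstacle here, since the argument is a short chain of elementary floor/ceiling identities. The only point requiring the stated assumption is the passage from the ceiling to $\ourfloor*{\cdot}+1$; without $b \nmid a\ourtimes d$ the quantity $a\ourtimes d/b$ could be an integer, in which case $\ourceiling*{\cdot} = \ourfloor*{\cdot}$ and the formula would instead read $d - \ourfloor*{a\ourtimes d/b}$. Thus the care to take is simply to invoke non-divisibility precisely where the ceiling is converted, and to note that all preceding manipulations are unconditional.
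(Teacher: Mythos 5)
Your proposal is correct and takes essentially the same route as the paper: both begin by rewriting $(b-a)\ourtimes d/b$ as $d - a\ourtimes d/b$ and then use the hypothesis $b \nmid a \ourtimes d$ to account for the loss of exactly $1$ when taking the floor. The only cosmetic difference is that the paper tracks the fractional part $p = a\ourtimes d/b - \ourfloor*{a\ourtimes d/b} \in (0,1)$ explicitly, whereas you route through the reflection identity $\ourfloor*{-x} = -\ourceiling*{x}$ followed by $\ourceiling*{x} = \ourfloor*{x}+1$ for non-integer $x$; these are the same argument in different notation.
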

 \begin{proof}
 Define $p = a \ourtimes d / b - \ourfloor*{a \ourtimes d / b }$. We
 have $p\in (0,1)$.
 We have that 
 $\ourfloor*{( b - a) \ourtimes d / b }
 = \ourfloor*{d -  a \ourtimes d / b }
 =  \ourfloor*{d -  \ourfloor*{a \ourtimes d / b }  - p }
 = d - 1 - \ourfloor*{a \ourtimes d / b }.$
 \end{proof}

\begin{theorem}\label{theorem:bettersigned} Given $d\in [1,2^{N-1})$, and non-negative integers $\reciprocal, L $ such that \begin{eqnarray*}2^{N-1+L} < \reciprocal \ourtimes d < 2^{N-1+L} +2^L,\end{eqnarray*} 
let $\mu= \ourfloor*{( \lsb_{N-1+L}(\reciprocal \ourtimes n)  \ourtimes d ) / 2^{N-1+L} }$ then
\begin{itemize}
\item $n \signedmod d = \mu $
for all $n \in [0,2^{N-1})$
\item and $n \signedmod d =  \mu -d + 1$
for all $n \in [-2^{N-1},0)$.
\end{itemize}
\end{theorem}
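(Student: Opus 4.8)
The plan is to split on the sign of the numerator, matching the two bullets of the statement: the non-negative case will reduce directly to Theorem~\ref{theorem:crazyass}, while the negative case is the substantive one and will be handled through the reflection symmetry of the signed remainder together with the bit-complementation identity of Remark~\ref{remark:clever}.

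For $n\in[0,2^{N-1})$ I would observe that the signed remainder agrees with the unsigned one, $n\signedmod d=n\bmod d$, and that $\reciprocal\ourtimes n\geq 0$, so $\lsb_{N-1+L}(\reciprocal\ourtimes n)=(\reciprocal\ourtimes n)\bmod 2^{N-1+L}$. The hypothesis $2^{N-1+L}<\reciprocal\ourtimes d<2^{N-1+L}+2^L$ is a strict tightening of the hypothesis of Theorem~\ref{theorem:crazyass} read with $N$ replaced by $N-1$, so that theorem applies and gives $\mu=\left(\left((\reciprocal\ourtimes n)\bmod 2^{N-1+L}\right)\ourtimes d\right)\ourdiv 2^{N-1+L}=n\bmod d=n\signedmod d$, settling the first bullet.

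For the negative case I would set $m=-n\in(0,2^{N-1}]$ and use the sign symmetry $(-m)\signedmod d=-(m\signedmod d)$ with $m\signedmod d=m\bmod d$, so that the goal becomes $\mu=d-1-(m\bmod d)$. Since $n<0$ and $\reciprocal>0$ (positive by hypothesis), the product $\reciprocal\ourtimes n$ is negative, and Lemma~\ref{lemma:moreclever} guarantees $2^{N-1+L}$ does not divide it; Remark~\ref{remark:clever} then yields $\lsb_{N-1+L}(\reciprocal\ourtimes n)=2^{N-1+L}-\lsb_{N-1+L}(\reciprocal\ourtimes m)$. Writing $a=\lsb_{N-1+L}(\reciprocal\ourtimes m)$ and $b=2^{N-1+L}$, this rewrites $\mu$ as $\ourfloor*{(b-a)\ourtimes d/b}$, and Lemma~\ref{lemma:rarara} collapses it to $d-1-\ourfloor*{a\ourtimes d/b}$ once I check its hypothesis that $b$ does not divide $a\ourtimes d$ --- which is exactly the conclusion $(y\ourtimes d)\bmod 2^{N-1+L}>0$ of Lemma~\ref{lemma:nofun} applied to the numerator $m\in(0,2^{N-1}]$, whose $y=\lsb_{N-1+L}(\reciprocal\ourtimes m)$ is our $a$. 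It then remains to recognize $\ourfloor*{a\ourtimes d/2^{N-1+L}}=m\bmod d$; since $\reciprocal\ourtimes m>0$ we have $a=(\reciprocal\ourtimes m)\bmod 2^{N-1+L}$, so for $m\in(0,2^{N-1})$ this is again Theorem~\ref{theorem:crazyass} with $N-1$ in place of $N$. Substituting and using $n\signedmod d=-(m\bmod d)$ rearranges $\mu=d-1-(m\bmod d)$ into $n\signedmod d=\mu-d+1$, as required.

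The step I expect to be delicate is the boundary numerator $n=-2^{N-1}$, that is $m=2^{N-1}$, which lies outside the half-open range $[0,2^{N-1})$ covered by Theorem~\ref{theorem:crazyass}; this is precisely why Lemma~\ref{lemma:nofun} was stated over the closed range $(0,2^{N-1}]$. Its proof in fact delivers the sharper inclusion $d\ourtimes y\in(2^{N-1+L}r,2^{N-1+L}(r+1))$, where $m=q\ourtimes d+r$, and this gives $\ourfloor*{a\ourtimes d/2^{N-1+L}}=r=m\bmod d$ uniformly across the closed range, endpoint included. I would therefore invoke that inclusion, rather than Theorem~\ref{theorem:crazyass}, to cover $m=2^{N-1}$; apart from this edge case the remaining work is the bookkeeping of chaining the lsb-complementation identity through Lemmas~\ref{lemma:rarara} and~\ref{lemma:nofun}.
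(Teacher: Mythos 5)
Your proposal is correct and follows essentially the same route as the paper's own proof: the non-negative case is delegated to Theorem~\ref{theorem:crazyass} with $N$ replaced by $N-1$, and the negative case chains Remark~\ref{remark:clever}, Lemma~\ref{lemma:nofun} and Lemma~\ref{lemma:rarara} to reduce to the unsigned result for $-n$. The one place you diverge is to your credit: the paper's proof applies Theorem~\ref{theorem:crazyass} to $-n$, which ranges over $(0,2^{N-1}]$, silently including the endpoint $-n=2^{N-1}$ that lies outside that theorem's stated range $[0,2^{N-1})$, whereas your use of the inclusion $d\ourtimes y \in (2^{N-1+L}r,\, 2^{N-1+L}(r+1))$ from the proof of Lemma~\ref{lemma:nofun} (which is stated for the closed range) handles that boundary case explicitly and closes this small gap in the paper's argument.
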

\begin{proof}
When $n$ is non-negative, then so is $\reciprocal \ourtimes n$, and 
$\lsb_{N-1+L}(\reciprocal \ourtimes n) $ is equal to $(\reciprocal \ourtimes n) \bmod 2^{N-1+L}$; 
there is no distinction between signed and unsigned $\bmod$, 
so the result follows by Theorem~\ref{theorem:crazyass}, replacing $N$ by $N-1$. (Theorem~\ref{theorem:crazyass} has a weaker constraint on $\reciprocal \ourtimes d$.)

Suppose that $n$ is negative ($n \in [-2^{N-1},0)$).
By Lemma~\ref{lemma:moreclever}, $2^{N-1+L}$ cannot divide 
$\reciprocal \ourtimes n$.
Hence, we have that 
$\lsb_{N-1+L}(\reciprocal \ourtimes n)   = 2^{N-1+L} - \lsb_{N-1+L}(\reciprocal \ourtimes (-n))$ by Remark~\ref{remark:clever}.
Thus we have
\begin{eqnarray*}\mu & =  &\ourfloor*{\lsb_{N-1+L}(\reciprocal \ourtimes n)  \ourtimes d / 2^{N-1+L} }\\
 &  = & \ourfloor*{\left( 2^{N-1+L} - \lsb_{N-1+L}(\reciprocal \ourtimes (-n))  \right) \ourtimes d / 2^{N-1+L}  } \text{\hspace{1cm}  by Remark~\ref{remark:clever}}\\
 & = & d - 1  - \ourfloor*{\lsb_{N-1+L}(\reciprocal \ourtimes (-n)) \ourtimes d  / 2^{N-1+L}  }\text{\hspace{1cm}by Lemmata~\ref{lemma:nofun}~and~\ref{lemma:rarara}}\\
  & = & d - 1 - \ourfloor*{\left((\reciprocal \ourtimes (-n)) \bmod 2^{N-1+L}\right) \ourtimes d / 2^{N-1+L}  }\\
  & = & d - 1 -  ((-n) \bmod d) \text{\hspace{5.3cm}  by Theorem~\ref{theorem:crazyass}.}\\
 \end{eqnarray*}
 Hence we have $\mu -d + 1 = -  ((-n) \bmod d) = n \bmod d$, which concludes the proof.
 \end{proof}

We do not need to be concerned with negative divisors since $n \signedmod d = n \signedmod -d$ for all integers $n$.

We can pick $\reciprocal, L$  in a manner similar to the unsigned case. 
We can  choose $\reciprocal= \ourfloor*{2^F/d} + 1$ and let $F= N - 1 + L$ where $L$ is an integer such that  $2^{N-1+L} < \reciprocal \ourtimes d < 2^{N-1+L} +2^L$.
With this choice of $\reciprocal$, we 
have that $\reciprocal \ourtimes d = 2^{N-1+L} - (2^{N-1+L} \bmod d) + d$. Thus we have 
the constraint $-(2^{N-1+L} \bmod d) + d<2^L$ on $L$. Because $- (2^{N-1+L} \bmod d) + d \in [1,d]$, it  suffices to pick $L$ large enough so that $2^L>d$. Thus any  $L> \log_2 d $ would do, and hence $F>  N + \log_2(d)$ is sufficient.
 It is not always best to pick $L$ to be minimal: it could be convenient to pick $L=N+1$.

\subsection{Fast Divisibility Check with a Single Multiplication}
\label{sec:fastdivcheck}

Following earlier work by Artzy et al.\cite{Artzy:1976:FDT:359997.360013},
Granlund and Montgomery\cite{Granlund:1994:DII:773473.178249} describe how we can check quickly whether an unsigned integer is divisible by a constant, without computing the remainder. We summarize their approach
before providing an alternative.
Given an odd divisor $d$, we can find its (unique) multiplicative inverse $\bar d$ defined as $d \ourtimes \bar d \bmod 2^N = 1$.
The existence of a multiplicative inverse $\bar d$ allows us to quickly divide an integer $n$ by $d$ when it is divisible by $d$, if $d$ is odd.
It suffices to multiply $n=a\ourtimes d$ by $\bar d$:
$n \ourtimes \bar d \bmod 2^N = a \ourtimes (d \ourtimes \bar d) \bmod 2^N = a \bmod 2^N = n \ourdiv d$.
When  the divisor is $2^K d$ for $d$ odd and $n$ is divisible by $2^K d$, then we can write $ n \ourdiv (2^K d) = (n \ourdiv 2^K) \ourtimes \bar d \bmod 2^N $.
As pointed out by Granlund and Montgomery, this observation can also enable us to quickly check whether a number is divisible by $d$. If $d$ is odd and $n\in [0,2^N)$ is divisible by $d$, then $n \ourtimes \bar d \bmod 2^N  \in [0, \ourfloor*{(2^N-1)/d }]$. Otherwise $n$ is not divisible by $d$. Thus, when $d$ is odd, we can check whether any integer in $[0,2^N)$ is divisible by $d$ with a multiplication followed by a comparison.
When the divisor is even ($2^K d$), then we have to check that $n \ourtimes \bar d \bmod 2^N  \in  [0, 2^K \ourtimes \ourfloor*{(2^N-1)/d }]$
and that $n$ is divisible by $2^K$ (i.e., $n \bmod 2^K = 0$). We can achieve the desired result by computing $n \ourtimes \bar d$, rotating the resulting word by $K$~bits and comparing the result with $\ourfloor*{(2^N-1)/d }$.

Granlund and Montgomery can check that an unsigned integer is divisible by another using as little as one multiplication and comparison when the divisor is odd, and a few more instructions when the divisor is even. Yet we can always check the divisibility with a single multiplication and a modulo reduction to a power of two---even when the divisor is even because of the following proposition. Moreover, a single precomputed constant ($\reciprocal$) is required.

\begin{proposition}\label{prop:divisibility}Given $d\in [1,2^N)$, and non-negative integers $\reciprocal, L $ such that 
$2^{N+L} \leq \reciprocal \ourtimes d \leq 2^{N+L} +2^L$ then given some $n \in [0,2^N)$, we have that
$d$ divides $n$ if and only if $(\reciprocal \ourtimes n) \bmod 2^{N+L} < \reciprocal$.
\end{proposition}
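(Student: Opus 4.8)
The plan is to reduce everything to the decomposition $n = q \ourtimes d + r$ with $q = n \ourdiv d \geq 0$ and $r = n \bmod d \in [0,d)$, so that $d$ divides $n$ exactly when $r=0$. Since the hypothesis $2^{N+L} \leq \reciprocal \ourtimes d \leq 2^{N+L}+2^L$ is precisely the hypothesis of Lemma~\ref{lemma:interval}, I can freely use the two-sided bound $\reciprocal \ourtimes r \leq (\reciprocal \ourtimes n) \bmod 2^{N+L} \leq \reciprocal \ourtimes r + 2^L q$. I would then prove the two implications of the biconditional separately, using the lower bound for one direction and a direct computation for the other.

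For the direction ``$(\reciprocal \ourtimes n) \bmod 2^{N+L} < \reciprocal$ implies $d \mid n$'', I would argue by contraposition. If $d$ does not divide $n$, then $r \geq 1$, and the lower bound from Lemma~\ref{lemma:interval} immediately gives $(\reciprocal \ourtimes n) \bmod 2^{N+L} \geq \reciprocal \ourtimes r \geq \reciprocal$, so the test fails. This half is essentially free once the lemma is in hand.

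The reverse direction is the part that needs real work, because the lemma's upper bound $\reciprocal \ourtimes r + 2^L q$ collapses to $2^L q$ when $r=0$, and $2^L q$ can be far larger than $\reciprocal$; it is too weak to conclude. Instead I would compute the residue exactly. Writing $s = \reciprocal \ourtimes d - 2^{N+L}$, the hypothesis says $s \in [0, 2^L]$, and when $r=0$ we have $n = q \ourtimes d$, so $\reciprocal \ourtimes n = q \ourtimes 2^{N+L} + q \ourtimes s$. Hence $(\reciprocal \ourtimes n) \bmod 2^{N+L}$ equals $(q \ourtimes s) \bmod 2^{N+L}$. Now I would bound $q \ourtimes s$: from $n = q \ourtimes d \leq 2^N-1$ we get $q \leq (2^N-1)/d$, and with $s \leq 2^L$ this yields $q \ourtimes s \leq (2^N-1)2^L/d < 2^{N+L}/d \leq \reciprocal$, where the last inequality is just the left half of the hypothesis rewritten as $\reciprocal \geq 2^{N+L}/d$. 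Since also $q \ourtimes s < 2^{N+L}/d \leq 2^{N+L}$, the reduction modulo $2^{N+L}$ does nothing, so the residue equals $q \ourtimes s$, which is strictly below $\reciprocal$.

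The main obstacle, then, is precisely that the divisible case cannot be read off from Lemma~\ref{lemma:interval} alone; it requires replacing the lemma's loose upper bound by the exact identity $\reciprocal \ourtimes n = q \ourtimes 2^{N+L} + q \ourtimes s$ and then tracking strict versus non-strict inequalities carefully---in particular checking $q \ourtimes s < 2^{N+L}$ so that the modular reduction is trivial, and extracting the strict inequality $q \ourtimes s < \reciprocal$ from the presence of the positive term $2^L$ in the numerator. I expect no edge cases beyond this: power-of-two divisors correspond to $s=0$, for which $q \ourtimes s = 0 < \reciprocal$ holds trivially, and $d=1$ is harmless since then every $n$ is divisible and $r=0$ always.
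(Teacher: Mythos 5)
Your proof is correct in both directions, but your assessment of Lemma~\ref{lemma:interval} is mistaken, and that misjudgment is exactly where you part ways with the paper. You claim that when $r=0$ the lemma's upper bound collapses to $2^L q$, which ``can be far larger than $\reciprocal$'' and is ``too weak to conclude.'' In fact it never exceeds $\reciprocal$: since $n < 2^N$ we have $q = n \ourdiv d \leq n/d < 2^N/d$, hence $2^L q < 2^{N+L}/d \leq \reciprocal$, the last inequality being the left half of the hypothesis---precisely the two ingredients you yourself invoke to bound $q \ourtimes s$. This is the paper's entire proof of the divisible direction: apply the lemma with $n \bmod d = 0$ to get $(\reciprocal \ourtimes n) \bmod 2^{N+L} \leq 2^L (n \ourdiv d)$, then observe $\reciprocal \geq 2^{N+L}/d > 2^L (n \ourdiv d)$. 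Your replacement---expanding $\reciprocal \ourtimes n = q \ourtimes 2^{N+L} + q \ourtimes s$ exactly, checking $q \ourtimes s < 2^{N+L}$ so that the modular reduction is trivial, and bounding $q \ourtimes s < \reciprocal$---is valid, and it even identifies the residue exactly rather than merely bounding it; but it amounts to re-deriving the $r=0$ case of the lemma rather than circumventing any weakness in it. (Your side remark that power-of-two divisors force $s=0$ is also not quite right---the hypothesis permits $s = 2^K j \leq 2^L$ when $d = 2^K$ with $K \leq L$---but this is harmless since your argument never needs that case distinction.) Your other direction, contraposition via the lemma's lower bound ($r \geq 1$ gives residue $\geq \reciprocal \ourtimes r \geq \reciprocal$), coincides with the paper's argument up to taking the contrapositive.
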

\begin{proof}
We have that $d$ divides $n$ if and only if $n \bmod d = 0$.
By Lemma~\ref{lemma:interval}, we have that
$\reciprocal \ourtimes (n\bmod d) \leq (\reciprocal \ourtimes n) \bmod 2^{N+L} \leq  \reciprocal \ourtimes (n\bmod d) + 2^{L} (n \ourdiv d)$.
We want to show that  $n \bmod d = 0$ is equivalent to $(\reciprocal \ourtimes n) \bmod 2^{N+L} < \reciprocal$.

Suppose that $n \bmod d = 0$, then we have that $(\reciprocal \ourtimes n) \bmod 2^{N+L} \leq  2^{L} (n \ourdiv d)$. However, by our constraints on $\reciprocal$, we have that $\reciprocal \geq 2^{N+L}/d>2^{L} (n \ourdiv d)$. Thus, if $n \bmod d = 0$ then $(\reciprocal \ourtimes n) \bmod 2^{N+L} < \reciprocal$.

Suppose that $(\reciprocal \ourtimes n) \bmod 2^{N+L} <\reciprocal$, then because $\reciprocal \ourtimes (n\bmod d) \leq (\reciprocal \ourtimes n) \bmod 2^{N+L} $, we have that
$\reciprocal \ourtimes (n\bmod d)< \reciprocal$ which implies that $n \bmod d = 0$.
This completes the proof.
\end{proof}

Thus if we have a reciprocal $\reciprocal= \ourceiling*{2^{F}/d}$  with $F=N+L$  large enough to compute the remainder exactly (see Algorithm~\ref{algo:bigsummary}), then $(\reciprocal \ourtimes n) \bmod 2^{F} < \reciprocal$ if and only if $n$ is divisible by $d$.
We do not need to pick $F$ as small as possible.
In particular, if we set $\reciprocal= \ourceiling*{2^{2N}/d}$, 
then $(\reciprocal \ourtimes n) \bmod 2^{2N} < \reciprocal$ if and only if $n$ is divisible by $d$.

\begin{remark}\label{remark:signeddivisibility}
We can extend our fast divisibility check to the signed case. Indeed, we have that $d$ divides $n$ if and only if $|d|$ divides $|n|$. Moreover, the absolute
value of any $N$-bit negative integer can be represented as an $N$-bit unsigned integer.
\end{remark}

\section{Software Implementation}
\label{sec:softimpl}
Using the C language,
we provide our implementations of the 32-bit remainder computation (i.e., \texttt{a \% d}) in Figs.~\trimmedref{fig:codemod}~and~\trimmedref{fig:codesignedmod} for unsigned and signed integers. In both case, the programmer is expected to precompute the constant \texttt{c}. For simplicity, the code shown here explicitly does not  handle the divisors $d\in\{-1,0,1,-2^{31}\}$.


For the x64 platforms, we provide the instruction sequences in assembly code produced by GCC and Clang for computing $n \bmod 95$ in Fig.~\trimmedref{fig:codesample}; in the third column, we provide the x64 code produced with our approach after constant folding. Our approach generates about half as many instructions.

In Fig.~\trimmedref{fig:codesample-arm}, we make the same comparison on the 64-bit ARM platform, putting side-by-side  compiler-generated code for the Granlund-Montgomery-Warren approach with   code generated from our approach.  As a RISC processor, ARM does not
handle most large constants in a single machine instruction, but typically
assembles them from 16-bit quantities. Since the Granlund-Montgomery-Warren algorithm requires only 32-bit constants, two 16-bit values are sufficient whereas our approach relies on 64-bit quantities and thus needs four  16-bit values. The ARM processor also has a ``multiply-subtract'' instruction that
is particularly convenient for computing the remainder from the quotient. Unlike the case with x64, our approach does not translate into significantly fewer instructions on the ARM platform.

These code fragments show that a code-size saving is achieved by our
approach on x64 processors, compared to the approach taken by the compilers.
We verify in \S~\ref{sec:experiments} that there is also a runtime advantage.

\begin{figure}\centering
\begin{minipage}{0.8\textwidth}
\begin{lstlisting}
uint32_t d = ...;// your divisor > 0
// c = ceil( (1<<64) / d ) ; we take L = N
uint64_t c = UINT64_C(0xFFFFFFFFFFFFFFFF) / d + 1;
// fastmod computes (n mod d) given precomputed c
uint32_t fastmod(uint32_t n /* , uint64_t c, uint32_t d */) {
  uint64_t lowbits = c * n;
  return ((__uint128_t)lowbits * d) >> 64;
}\end{lstlisting}
\end{minipage}
\caption{\label{fig:codemod}C code implementing a fast unsigned remainder function using the \texttt{__uint128_t} type extension.
}
\end{figure}

\begin{figure}\centering
\begin{minipage}{0.8\textwidth}
\begin{lstlisting}
int32_t d = ...;// your non-zero divisor in [-2147483647,2147483647]
uint32_t pd =  d < 0 ? -d : d; // absolute value, abs(d)
// c = floor( (1<<64) / pd ) + 1; Take L = N + 1
uint64_t c = UINT64_C(0xFFFFFFFFFFFFFFFF) / pd
                  + 1 + ((pd & (pd-1))==0 ? 1 : 0);
// fastmod computes (n mod d) given precomputed c
int32_t fastmod(int32_t n /* , uint64_t c, uint32_t pd */) {
  uint64_t lowbits = c * n;
  int32_t highbits = ((__uint128_t) lowbits * pd) >> 64;
  // answer is equivalent to (n<0) ? highbits - 1 + d : highbits
  return highbits - ((pd - 1) & (n >> 31));
}
\end{lstlisting}
\end{minipage}
\caption{\label{fig:codesignedmod}C code implementing a fast signed remainder function using the \texttt{__uint128_t} type extension.
}
\end{figure}

\begin{figure}\centering
{
\begin{tabular}{c|c|c}
\begin{lstlisting}
// GCC 6.2
mov	eax, edi
mov	edx, 1491936009
mul	edx
mov	eax, edi
sub	eax, edx
shr	eax
add	eax, edx
shr	eax, 6
imul	eax, eax, 95
sub	edi, eax
mov	eax

\end{lstlisting}
&
\begin{lstlisting}
// Clang 4.0
mov	eax, edi
imul	rax, rax, 1491936009
shr	rax, 32
mov	ecx, edi
sub	ecx, eax
shr	ecx
add	ecx, eax
shr	ecx, 6
imul	eax, ecx, 95
sub	edi, eax
mov	eax, edi
\end{lstlisting}
&
\begin{lstlisting}
// our fast version  
// + GCC 6.2
movabs	rax, 
  194176253407468965
mov	edi, edi
imul	rdi, rax
mov	eax, 95
mul	rdi
mov	rax, rdx
//
//
//
//
\end{lstlisting}
\end{tabular}
}
\caption{\label{fig:codesample}Comparison between the x64 code generated by GCC 6.2 for unsigned $a \bmod 95$ (left) and our version (right).
Clang 4.0 generated the middle code, and when compiling our version (not shown)  used a \texttt{mulx} instruction to place the high bits of the product directly into the
return register, saving one instruction over GCC\@.}
\end{figure}

\begin{figure}\centering
\begin{tabular}{c|c}
\begin{lstlisting}
// GCC 6.2 for a % 95 on ARM
mov	w1, 8969
mov	w3, 95
movk	w1, 0x58ed, lsl 16
umull	x1, w0, w1
lsr	x1, x1, 32
sub	w2, w0, w1
add	w1, w1, w2, lsr 1
lsr	w1, w1, 6
msub	w0, w1, w3, w0
\end{lstlisting}
&
\begin{lstlisting}
// our version of a % 95 + GCC 6.2
mov	x2, 7589
uxtw	x0, w0
movk	x2, 0x102b, lsl 16
mov	x1, 95
movk	x2, 0xda46, lsl 32
movk	x2, 0x2b1, lsl 48
mul	x0, x0, x2
umulh	x0, x0, x1
//

\end{lstlisting}
\end{tabular}
\caption{\label{fig:codesample-arm}Comparison between the ARM code generated by GNU GCC 6.2 for $a \bmod 95$ (left) and our word-aligned version (right).
In both cases, except for instruction order, Clang's code was similar to GCC's.}

\end{figure}

\subsection{Divisibility}
\label{sec:divisibility-impl}

We are interested in determining quickly whether a 32-bit integer $d$ divides a 32-bit integer $n$---faster than by checking whether the remainder is zero.
To the best of our knowledge, no compiler includes such an optimization, though some software libraries provide related  fast functions.\footnote{\url{https://gmplib.com}}
We present the code for our approach (LKK) in Fig.~\trimmedref{fig:lkk-unsigned-impl}, and our implementation of the Granlund-Montgomery approach (GM) in Fig.~\trimmedref{fig:gm-unsigned-impl}.

\begin{figure}\centering
\begin{minipage}{0.65\textwidth}
\begin{lstlisting}
// calculate c for use in lkk_divisible
uint64_t lkk_cvalue(uint32_t d) {
  return 1 + UINT64_C(0xffffffffffffffff) / d;
}

// given precomputed c, checks whether n % d == 0
bool lkk_divisible(uint32_t n,
    uint64_t c) {
  // rhs is large when c==0
  return n * c <= c - 1;
}
\end{lstlisting}
\end{minipage}
\caption{\label{fig:lkk-unsigned-impl} Unsigned divisibility test, our approach.}
\end{figure}

\begin{figure}\centering
\begin{minipage}{0.6\textwidth}
  \begin{lstlisting}
// rotate n by e bits, avoiding undefined behaviors
// cf https://blog.regehr.org/archives/1063
uint32_t rotr32(uint32_t n, uint32_t e) {
  return (n >> e) | ( n << ( (-e)&31) );
}

// does d divide n?
// d = 2**e * d_odd; dbar = multiplicative_inverse(d_odd)
// thresh = 0xffffffff / d
bool gm_divisible(uint32_t n,
    uint32_t e, uint32_t dbar,
    uint32_t thresh) {
  return rotr32(n * dbar, e) <= thresh;
}

// Newton's method per Warren,
// Hacker's Delight  pp. 246--247
uint32_t multiplicative_inverse(uint32_t d) {
  uint32_t x0 = d + 2 * ((d+1) & 4);
  uint32_t x1 = x0 * (2 - d * x0);
  uint32_t x2 = x1 * (2 - d * x1);
  return  x2 * (2 - d * x2);
}
\end{lstlisting}
\end{minipage}
\caption{\label{fig:gm-unsigned-impl} Unsigned divisibility test, Granlund-Montgomery approach.}
\end{figure}

\section{Experiments}

\label{sec:experiments}


Superiority over the Granlund-Montgomery-Warren approach might depend on such CPU characteristics
as the relative speeds of instructions for integer division, 32-bit integer multiplication and 64-bit integer division.
Therefore, we tested our software on several x64 platforms and on ARM\footnote{
With GCC~4.8 on the ARM platform we observed that, for many constant divisors, the
compiler chose to generate a \texttt{udiv} instruction instead of using the
Granlund-Montgomery code sequence.  This is not seen for GCC~6.2.
} and POWER8 servers, and relevant details are given in
Table~\trimmedref{tab:test-cpus}.
The choice of
multiplication instructions and instruction scheduling can vary by compiler, and thus we tested using
various versions of GNU GCC and LLVM's Clang.
For brevity we primarily report results from the Skylake platform, with
comments on points where the other platforms were significantly different.
For the  Granlund-Montgomery-Warren approach with compile-time constants, we use the optimized divide and remainder operations built into
GCC and Clang. 

We sometimes need to repeatedly divide by a constant that is known only at runtime. In such instances, an optimizing compiler
may not be helpful.   Instead a programmer might rely on a library offering fast division functions. 
For runtime constants on x64 processors, we use the libdivide library\footnote{\url{http://libdivide.com}} as it provides a 
well-tested and optimized implementation.

On x64 platforms, we use the compiler flags  \texttt{-O3 -march=native}; on ARM
we use \texttt{-O3 -march=armv8-a} and on
POWER8 we use \texttt{-O3 -mcpu=power8}.
Some tests have results reported in wall-clock time, whereas
in other tests, the Linux \texttt{perf stat} command was used to obtain the total number of processor cycles
spent doing an entire benchmark program.
To ease reproducibility, we make our benchmarking software and scripts freely available.\footnote{\url{https://github.com/lemire/constantdivisionbenchmarks}}

\begin{table}
\caption{\label{tab:test-cpus} Systems Tested
}
\centering
\begin{minipage}{\textwidth}
\centering
\begin{tabular}{cccc}\toprule
Processor      & Microarchitecture                             & Compilers\\ \midrule
Intel i7-6700  & Skylake (x64) 
                                                               & GCC 6.2; Clang 4.0   & default platform    \\
Intel i7-4770  & Haswell (x64)                                       & GCC  5.4; Clang 3.8  &      \\
AMD Ryzen 7 1700X  & Zen (x64)                                        & GCC 7.2; Clang 4.0  &  \\
POWER8 & POWER8 Murano & GCC 5.4; Clang 3.8\\
AMD Opteron A1100 & ARM Cortex A57 (Aarch64) &  GCC 6.2; Clang 4.0   &  \\
\bottomrule
\end{tabular}
\end{minipage}
\end{table}


\subsection{Beating the Compiler}
\label{sec:beating-compiler}

We implement a 32-bit 
linear congruential generator\cite{knuth2} that generates random numbers according to the function $X_{n+1} = (a \ourtimes X_n +b) \bmod d$, starting from a given seed $X_0$. 
Somewhat arbitrarily, we set the seed to 1234, we use 31 as the multiplier ($a=31$) and the additive constant is set to 27961 ($b=27961$). We call the function 100~million times, thus generating 100~million random numbers. The divisor $d$ is set at compile time.
See Fig.~\ref{fig:lcg-impl}.
In the signed case, we use a negative multiplier ($a=-31$).

\begin{figure}\centering
\begin{minipage}{0.5\textwidth}
\begin{lstlisting}
uint32_t x = 1234;
for(size_t i = 0; i < 100000000; i++) {
  // d may be set at compile time
  x = (32 * x + 27961) % d; 
}
\end{lstlisting}
\end{minipage}
\caption{\label{fig:lcg-impl} Code for a linear congruential generator used to benchmark division by a constant.}
\end{figure}

Because the divisor is a constant, compilers can optimize the integer division using the  Granlund-Montgomery approach. We refer to this scenario as the \emph{compiler} case. To prevent the compiler from proceeding with such an optimization and force it to repeatedly use the division instruction, we can declare the variable holding the modulus to be volatile (as per the C standard). We refer to this scenario as the \emph{division instruction} case. In such cases, the compiler is not allowed to assume that the modulus is constant---even though it is.
We verified that the assembly generated by the compiler includes the division instruction and does not include expensive operations such as memory barriers or cache flushes. We verified that our wall-clock times are highly repeatable\footnote{For instance, we repeated tests 20 times for 9 divisors in Figs.~\ref{fig:beatingthecompiler}abcd
and~\ref{fig:beatingthecompilernotskylake}ab, and we observed maximum differences among the 20 trials of 4.8\,\%, 0.3\,\%, 0.7\,\%, 0.0\,\%, 0.8\,\% and 0.9\,\%,
respectively.  
}.

We present our results in  Fig.~\trimmedref{fig:beatingthecompiler} where we compare with our  alternative. In all cases, our approach is superior to the code produced by the compiler, except for powers of two in the case of GCC\@. The benefit of our functions can reach 30\%.

The performance of the compiler (labelled as \emph{compiler}) depends on the divisor for both GCC and Clang, though Clang has greater variance.
The performance of our  approach is insensitive to the divisor, except when the divisor is a power of two.

We observe that  in the unsigned case, Clang optimizes very effectively when the divisor is  a small power of two. This remains true even when we disable loop unrolling (using the \texttt{-fno-unroll-loops} compiler flag).
By inspecting the produced code, we find that Clang (but not GCC) optimizes away the multiplication entirely in the sense that, for example,  $X_{n+1} = (31 \ourtimes X_n +27961) \bmod 16$ is transformed into  $X_{n+1} = \lsb_{4}(9 - X_n)$. We find it interesting that these optimizations are applied both in the \emph{compiler} functions as well as in our functions.
Continuing with the unsigned case, we find that Clang often produces slightly more efficient compiled code than GCC for our functions, even when the divisor is not a power of two: compare Fig.~\ref{fig:beatingthecompilergcc} with Fig.~\ref{fig:beatingthecompilerclang}. However, these small differences disappear if we disable loop unrolling.

Yet, GCC seems preferable in the signed benchmark: in Figs.~\ref{fig:signedbeatingthecompilergcc} and~\ref{fig:signedbeatingthecompilerclang}, Clang is slightly less efficient than GCC, sometimes requiring \SI{0.5}{\second} to complete the computation whereas GCC never noticeably exceeds \SI{0.4}{\second}.

\begin{figure}\centering
\subfloat[GNU GCC (unsigned) \label{fig:beatingthecompilergcc}]{
\includegraphics[width=0.49\columnwidth]{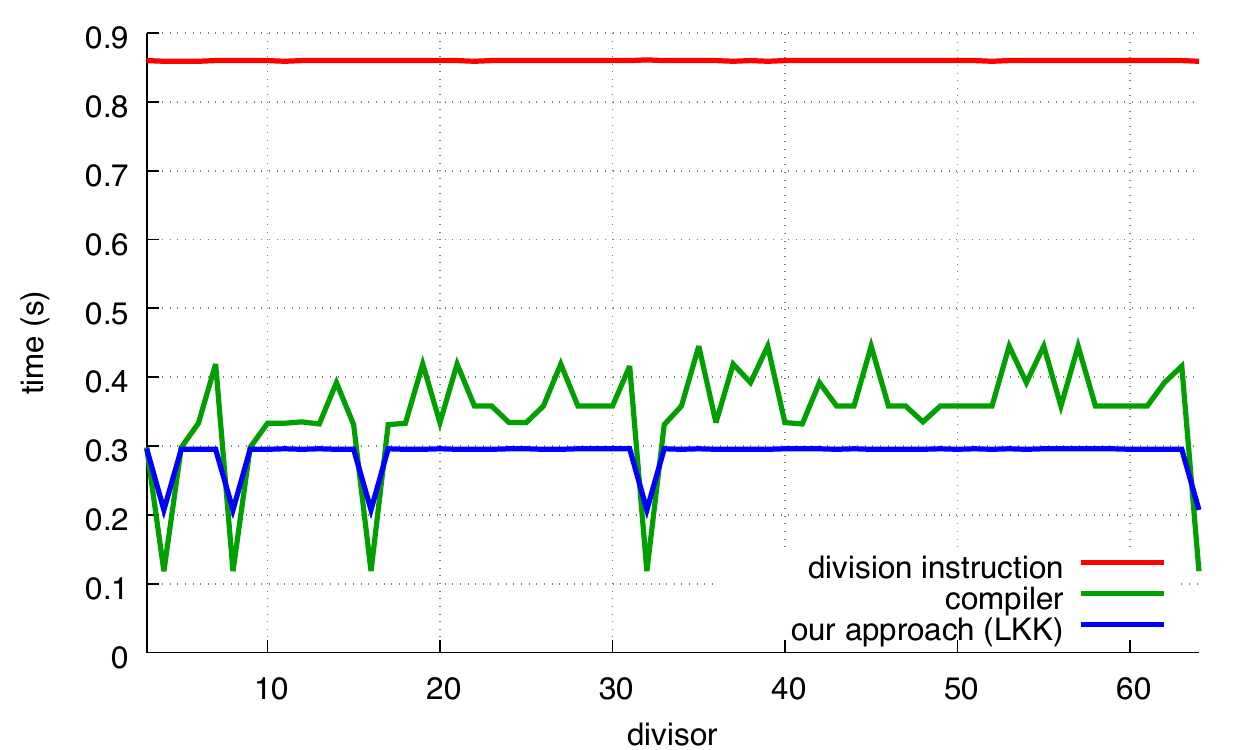}
}\subfloat[GNU GCC  (signed) \label{fig:signedbeatingthecompilergcc}]{
\includegraphics[width=0.49\columnwidth]{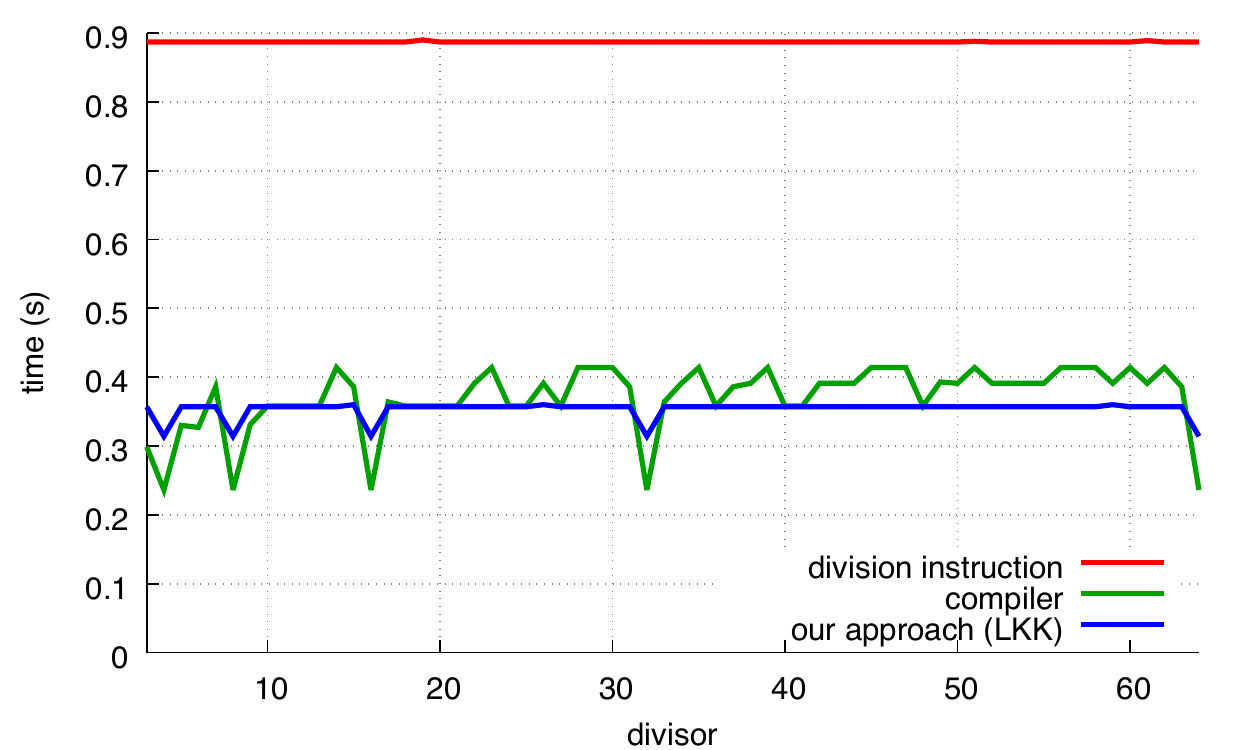}
}\\
\subfloat[LLVM's Clang (unsigned) \label{fig:beatingthecompilerclang}]{
\includegraphics[width=0.49\columnwidth]{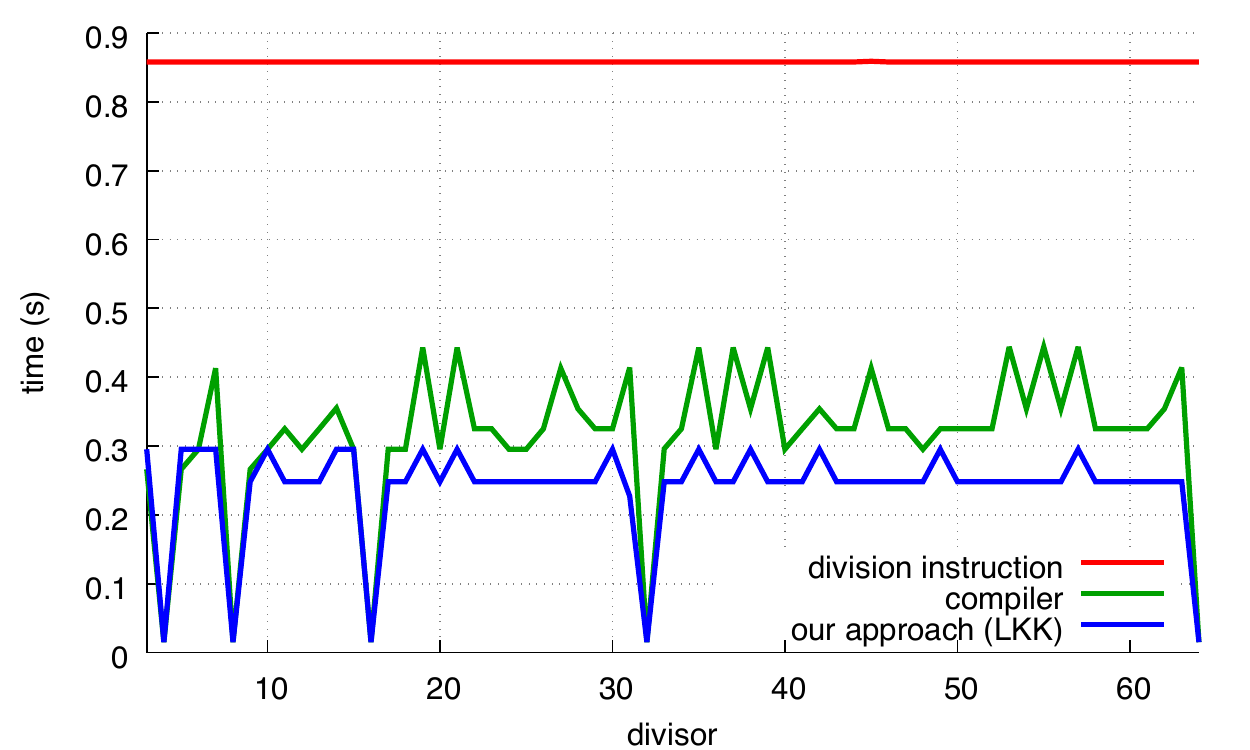}
}\subfloat[LLVM's Clang  (signed) \label{fig:signedbeatingthecompilerclang}]{
\includegraphics[width=0.49\columnwidth]{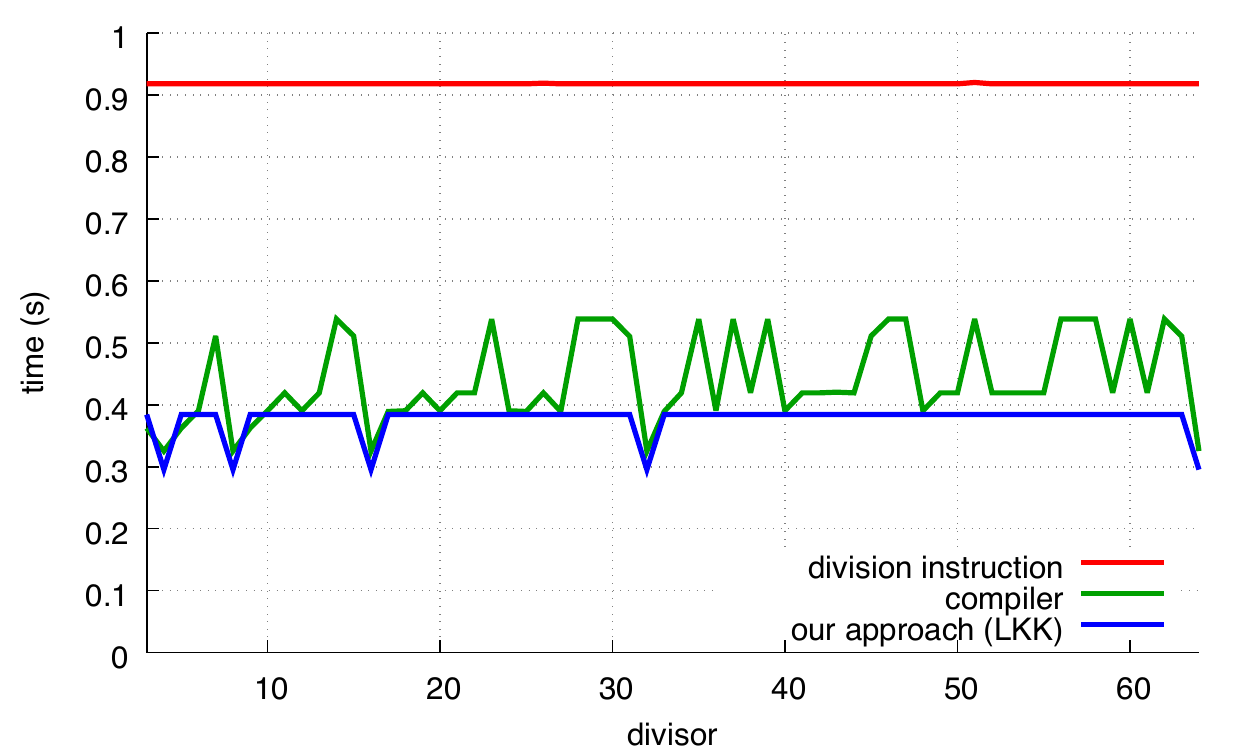}
}
\caption{\label{fig:beatingthecompiler}Wall-clock time to compute 100~million random integers using a linear congruential generator with various divisors set at compile time (Skylake x64)}
\end{figure}

\begin{figure}\centering
\subfloat[Ryzen (GCC, unsigned) \label{fig:beatingthecompilergccryzen}]{
\includegraphics[width=0.49\columnwidth]{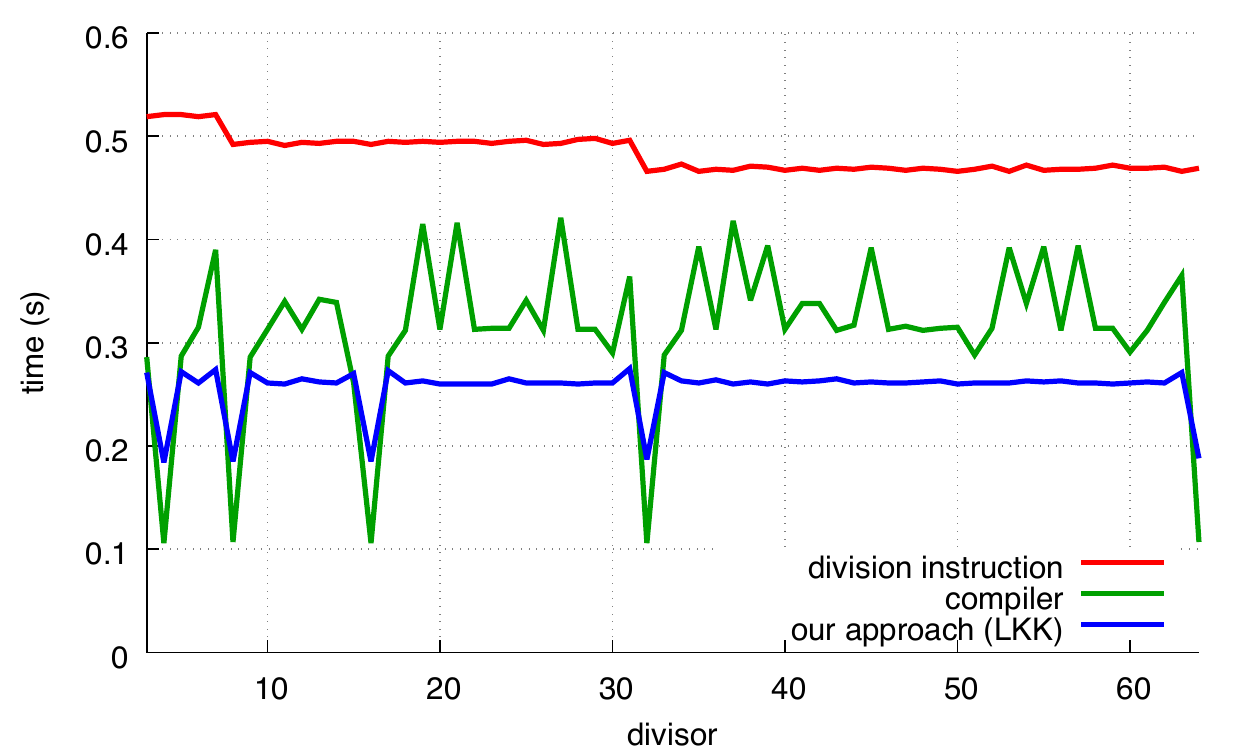}
}\subfloat[ARM (GCC, unsigned)\label{fig:beatingthecompilergccarm}]{
\includegraphics[width=0.49\columnwidth]{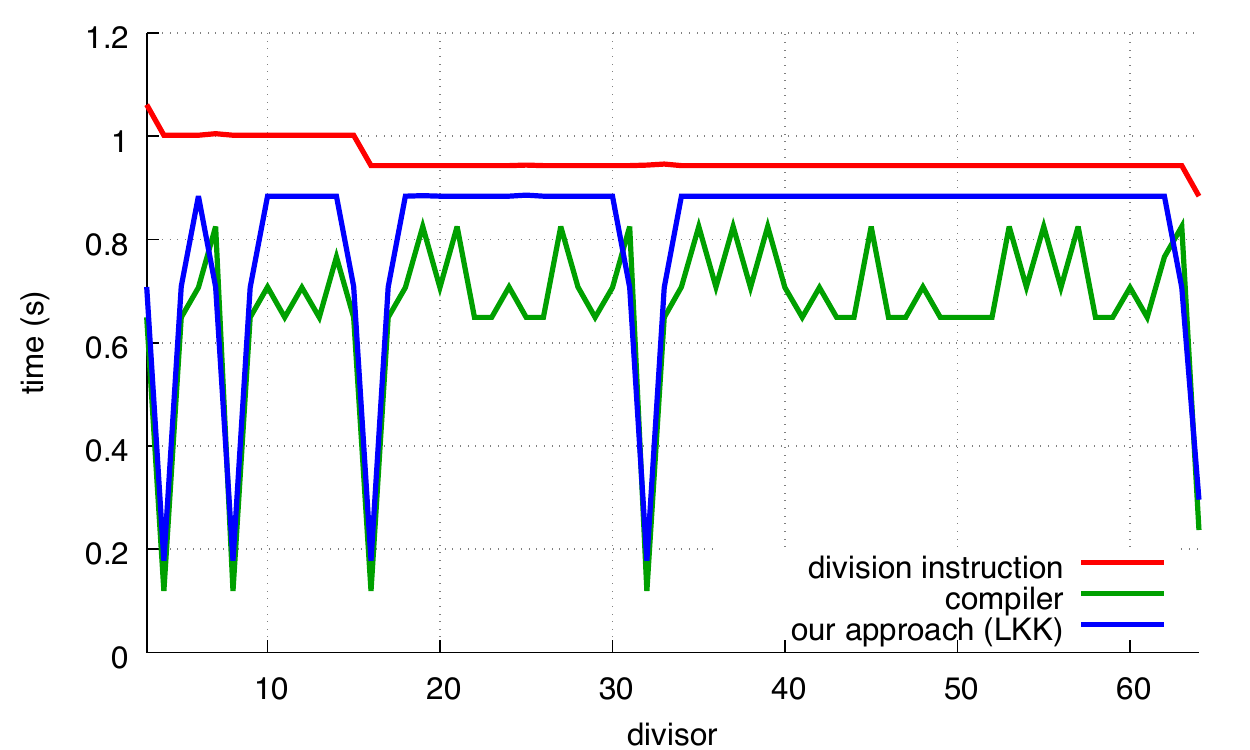}
}\\\subfloat[POWER8 (Clang, unsigned) \label{fig:beatingthecompilerclangpower}]{
\includegraphics[width=0.49\columnwidth]{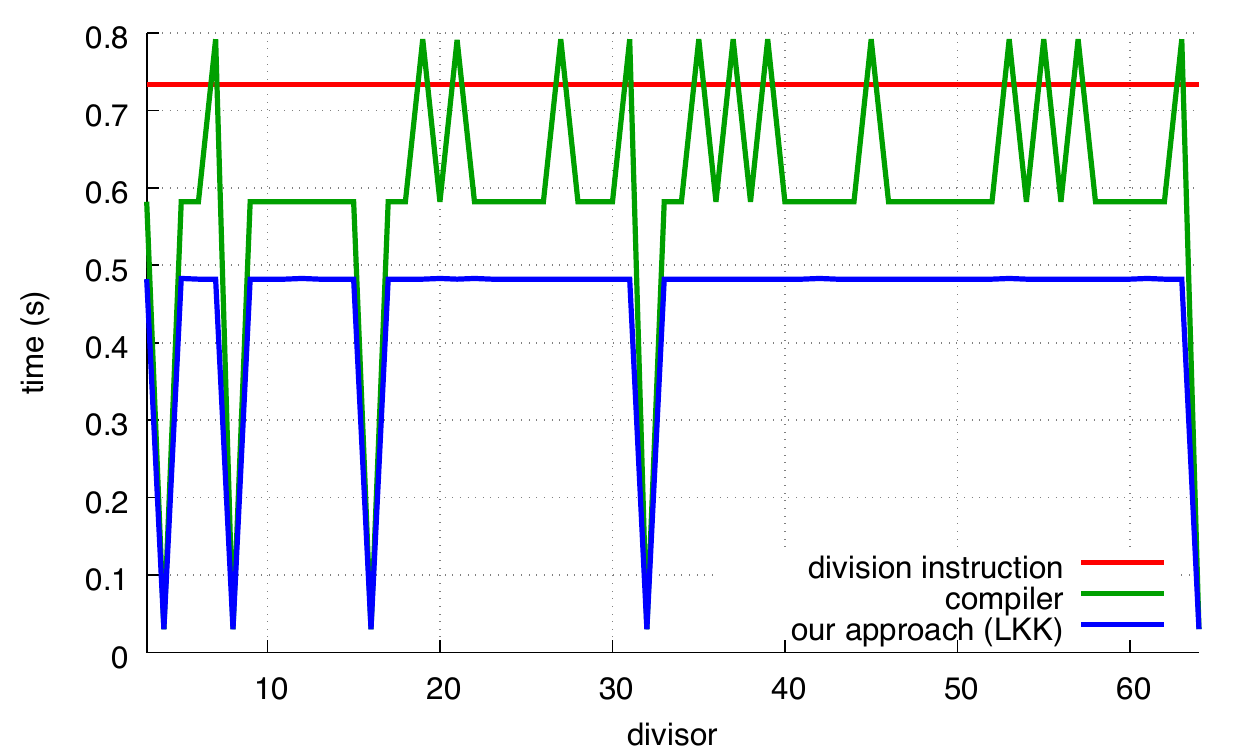}}
\subfloat[POWER8 (Clang, signed) \label{fig:beatingthecompilerclangpowersigned}]{
\includegraphics[width=0.49\columnwidth]{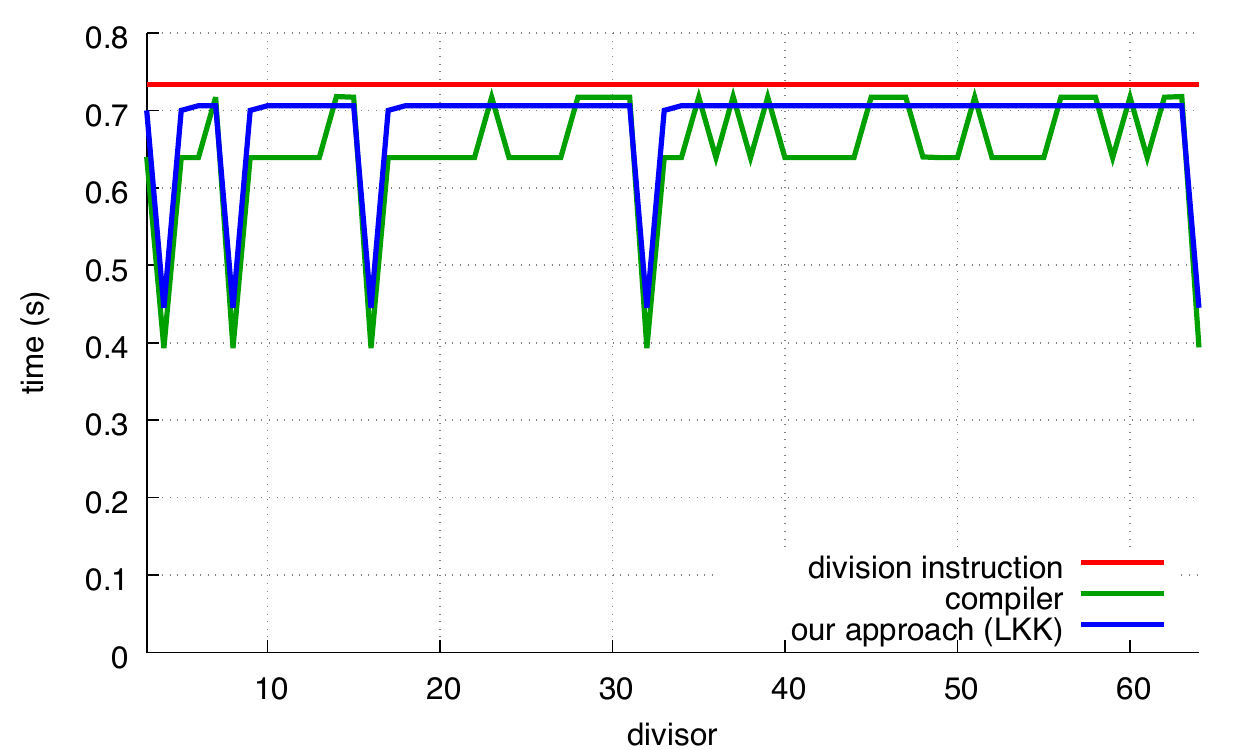}}
\caption{\label{fig:beatingthecompilernotskylake} Ryzen, ARM and POWER8 results for small divisors.}
\end{figure}

For comparison, Fig.~\trimmedref{fig:beatingthecompilernotskylake} shows how the Ryzen, POWER8 and ARM processors perform on unsigned
computations.
The speed of the hardware integer-division instruction varies,  speeding up
at $d=8$ and again at $d=32$ for Ryzen and $d=4$, 16, 64, 256 and 1024 for ARM\@.
The gap between hardware integer division  and Granlund-Montgomery (compiler) is less
on Ryzen, POWER8 and ARM than on Skylake; for some divisors, there is little benefit to using
\textit{compiler} on POWER8 and ARM\@.  On x64 platforms, our  approach continues to be significantly
faster than hardware integer division for all divisors.

On  ARM, the performance is limited when computing remainders using our approach.  
Unlike x64 processors, ARM processors require more than one instruction to load a constant such as the reciprocal ($\reciprocal$), but that is not a concern in this instance since  the compiler loads $\reciprocal$ into a register outside of the loop.
We believe that the reduced speed has to do with the
performance  of the multiplication instructions of our Cortex~A57 processor~\cite{arma57}. To compute the most significant 64~bits of a 64-bit product as needed by our functions, we must use the multiply-high instructions (\texttt{umulh} and \texttt{smulh}), but they  require six~cycles of latency and they prevent the execution of other multi-cycle instructions for an additional three~cycles. 
In contrast, multiplication instructions on x64 Skylake processors produce the full 128-bit product in three cycles.
Furthermore, our ARM processor has a multiply-and-subtract instruction with a latency of three cycles. Thus  it is advantageous to rely on the multiply-and-subtract instruction instead of the multiply-high instruction. Hence, it is faster to compute the remainder from the quotient by multiplying and subtracting ($r = n - (n \ourdiv d) \ourtimes d $). Furthermore, our ARM processor has fast division instructions:  the ARM optimization manual for Cortex A57 processors indicates that both signed and unsigned  division require between 4 and 20~cycles of latency~\cite{arma57} whereas integer division instructions on Skylake processors (\texttt{idiv} and \texttt{div}) have  26~cycles of latency for 32-bit registers~\cite{fog2016instruction}. Even if we take into account that division instructions on ARM computes solely the quotient, as opposed to both the quotient and remainder on x64, it seems that the ARM platform has a competitive division latency. Empirically, the division instruction on ARM is often within 10\% of the  Granlund-Montgomery  compiler optimization (Fig.~\ref{fig:beatingthecompilergccarm})  whereas the compiler optimization is consistently more than twice as fast as the division instruction on a Skylake processor (see Fig.~\ref{fig:beatingthecompilergcc}).  


%

Results for POWER8 are shown in Figs.~\ref{fig:beatingthecompilerclangpower}~and~\ref{fig:beatingthecompilerclangpowersigned}. Our unsigned approach is better than the compiler's; indeed the compiler would sometimes have done better to generate a divide instruction than use the Granlund-Montgomery approach.  For our signed approach, both GCC and Clang had trouble generating efficient code for many divisors. 

As with ARM, code generated for POWER8 also deals with 64-bit constants less directly than x64 processors. If not in registers, POWER8 code  loads 64-bit constants from memory, using two operations to construct a 32-bit address that is then used with a load instruction.  In this benchmark, however, the compiler keeps 64-bit constants in registers.
Like ARM, POWER8 has instructions that compute the upper
64~bits of a 64-bit product.  The POWER8 microarchitecture\cite{power8uarch} has good support
for integer division: it has two fixed-point
pipelines, each containing a multiplier unit and a divider unit. 
When the multi-cycle divider unit is operating,
fixed-point operations can usually be issued to other
units in its pipeline.  In our benchmark, dependencies between
successive division instructions  prevent the processor from using more than one divider.  
Though we have not seen published data on the actual latency and throughput of division and multiplication on this processor, we did not observe the divisor affecting the division instruction's speed, at least within the range of 3 to 4096.

Our results suggest that the gap between multiplication and division performance on the POWER8 lies between that of ARM and Intel; the fact that our approach (using 64-bit multiplications) outperforms the compiler's approach (using 32-bit multiplications) seems to indicate that, unlike ARM, the instruction to compute the most significant bits of a 64-bit product is not much slower than the
instruction to compute a 32-bit product.

Looking at Fig.~\trimmedref{fig:hashbenches-large-d-arm-ryzen}, we see how the approaches compare for 
larger divisors.  The division instruction is sometimes the fastest approach on ARM, and sometimes it
can be faster than the compiler approach on Ryzen.
Overall, our approach is preferred on Ryzen (as well as Skylake and POWER8), but not on ARM\@.

\begin{figure}\centering
\subfloat[Ryzen (GCC)] {
\includegraphics[width=0.49\columnwidth]{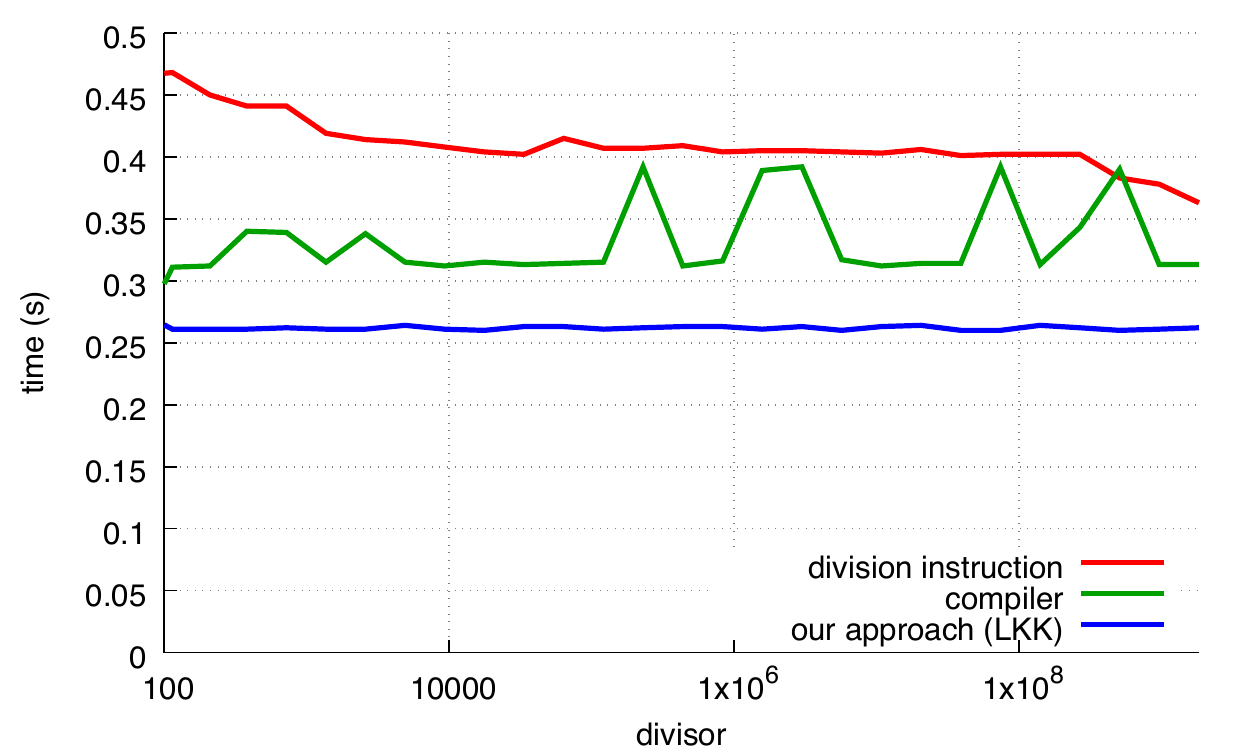}
}\subfloat[ARM (Clang)\label{fig:hashbenches-large-d-arm}] {
\includegraphics[width=0.49\columnwidth]{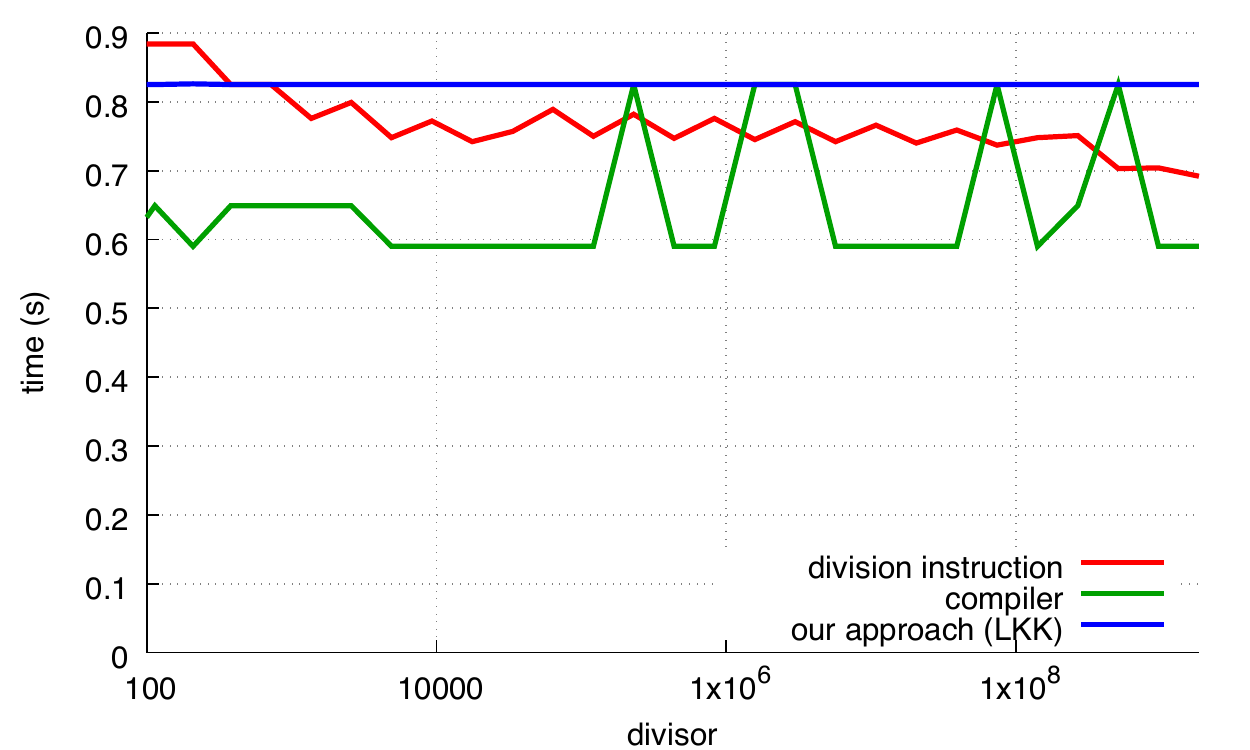}
}
\caption{\label{fig:hashbenches-large-d-arm-ryzen} Ryzen and ARM results for 28 larger divisors (using unsigned arithmetic).
Our approach performed slightly worse when compiled by GCC on ARM, but the Ryzen results were not sensitive to the choice of the compiler.
On Skylake (not shown), the division instruction behaved similarly for large and small divisors, as did compiler and our approach.
}
\end{figure}

\subsection{Beating the libdivide Library}

There are instances when the divisor might not be known at compile time.
 In such instances, we might use a library such as a libdivide.
We once again use our benchmark based on a 
linear congruential generator using the algorithms, but this time, we provide the divisor as a program parameter.

The libdivide library does not have functions to compute the
remainder, so we use its functions to compute the quotient. It has two
types of functions: regular "branchful" ones, those that include some branches that
depend on the divisor, and branchless ones. In this benchmark, the
invariant divisor makes the branches perfectly predictable, and
thus the libdivide branchless functions were always slower.  Consequently we
omit the branchless results.

We present our results in  Fig.~\trimmedref{fig:beatinglibdivide}.
The performance levels of our  functions\footnote{
When 20 test runs were made for 9 divisors, timing results among the 20 never differed by more
than 1\%.
} are insensitive to the divisor, and our performance levels
are always superior to those of the libdivide functions (by about 15\%), except for powers of two in the unsigned case.
In these cases, libdivide is faster, but this is explained by  a fast conditional code path for powers of two.  

\begin{figure}\centering
\subfloat[GNU GCC (unsigned) \label{fig:beatinglibdividegcc}]{
\includegraphics[width=0.49\columnwidth]{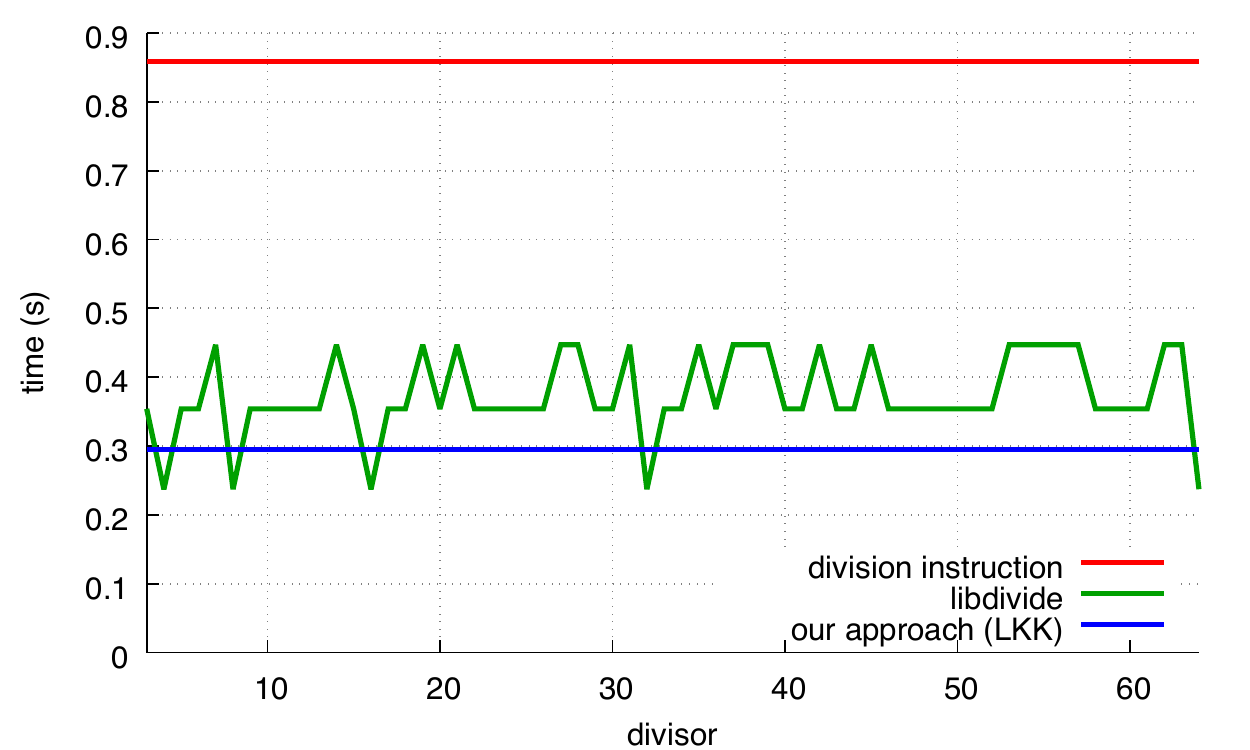}
}
\subfloat[GNU GCC (signed) \label{fig:beatinglibdividegcc-signed}]{
\includegraphics[width=0.49\columnwidth]{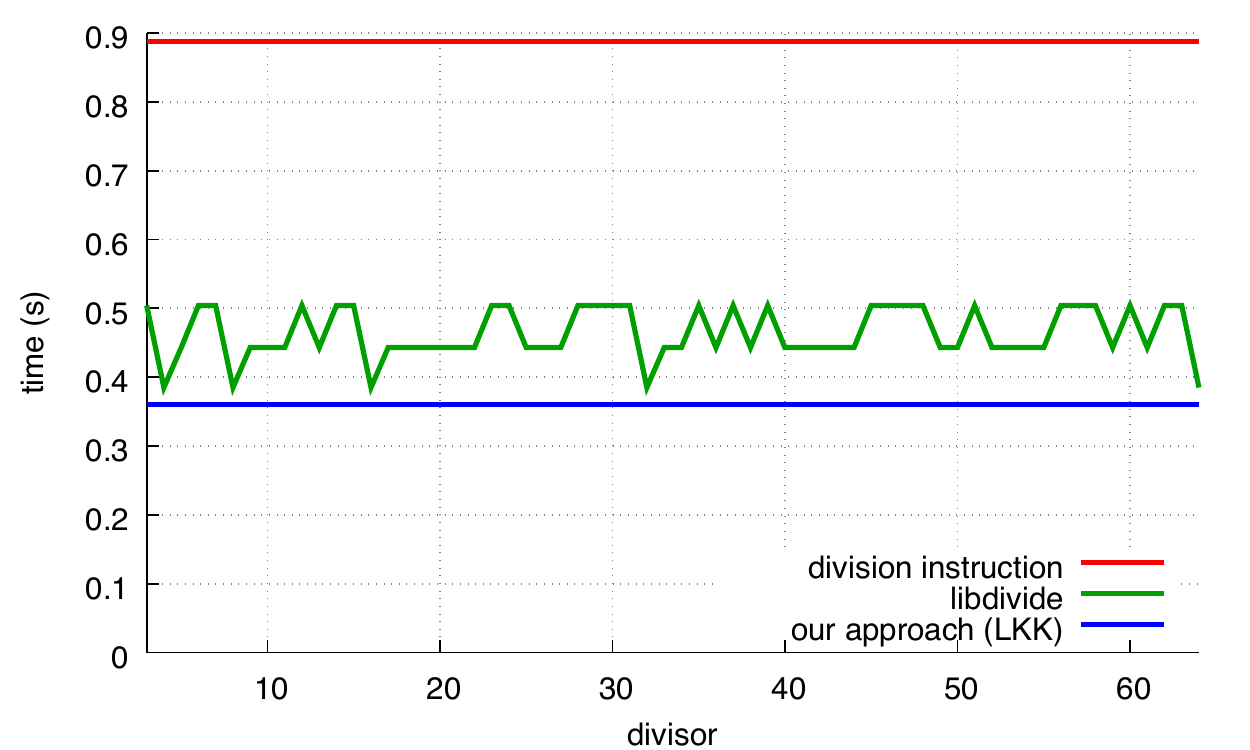}
}
\\
\subfloat[LLVM's Clang (unsigned) \label{fig:beatinglibdivideclang}]{
\includegraphics[width=0.49\columnwidth]{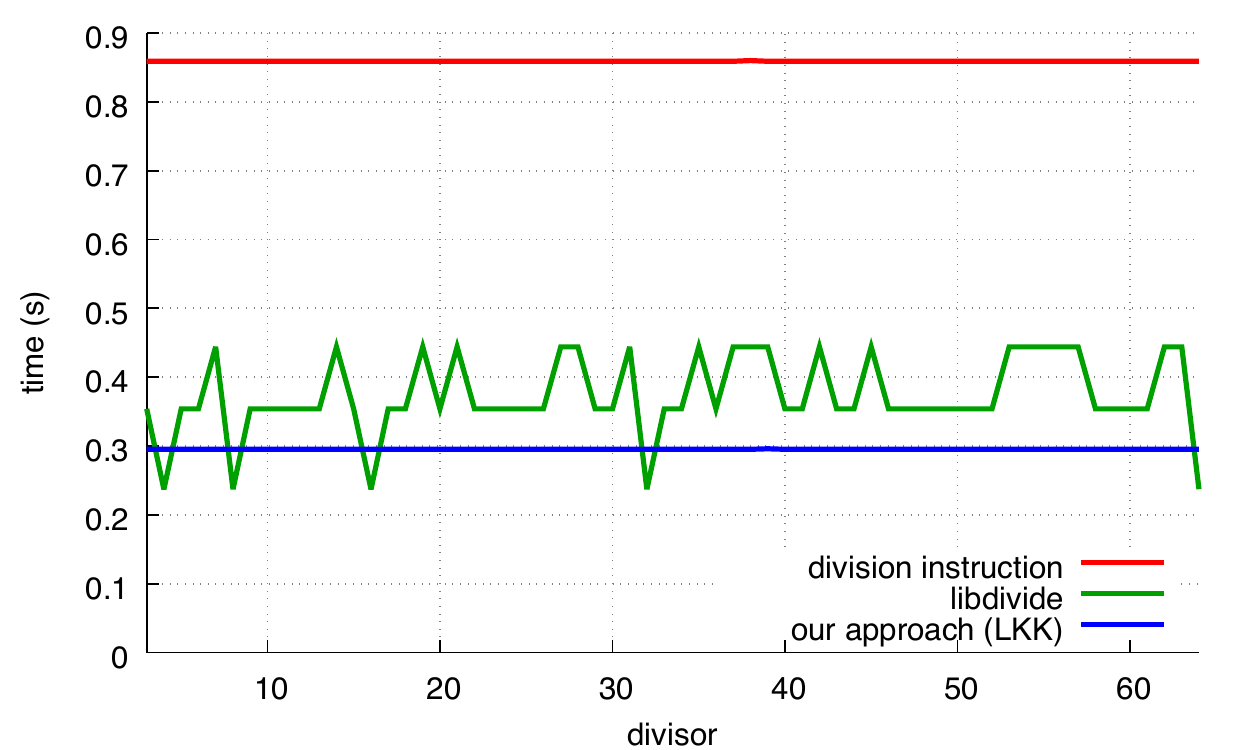}
}
\subfloat[LLVM's Clang (signed) \label{fig:beatinglibdivideclang-signed}]{
\includegraphics[width=0.49\columnwidth]{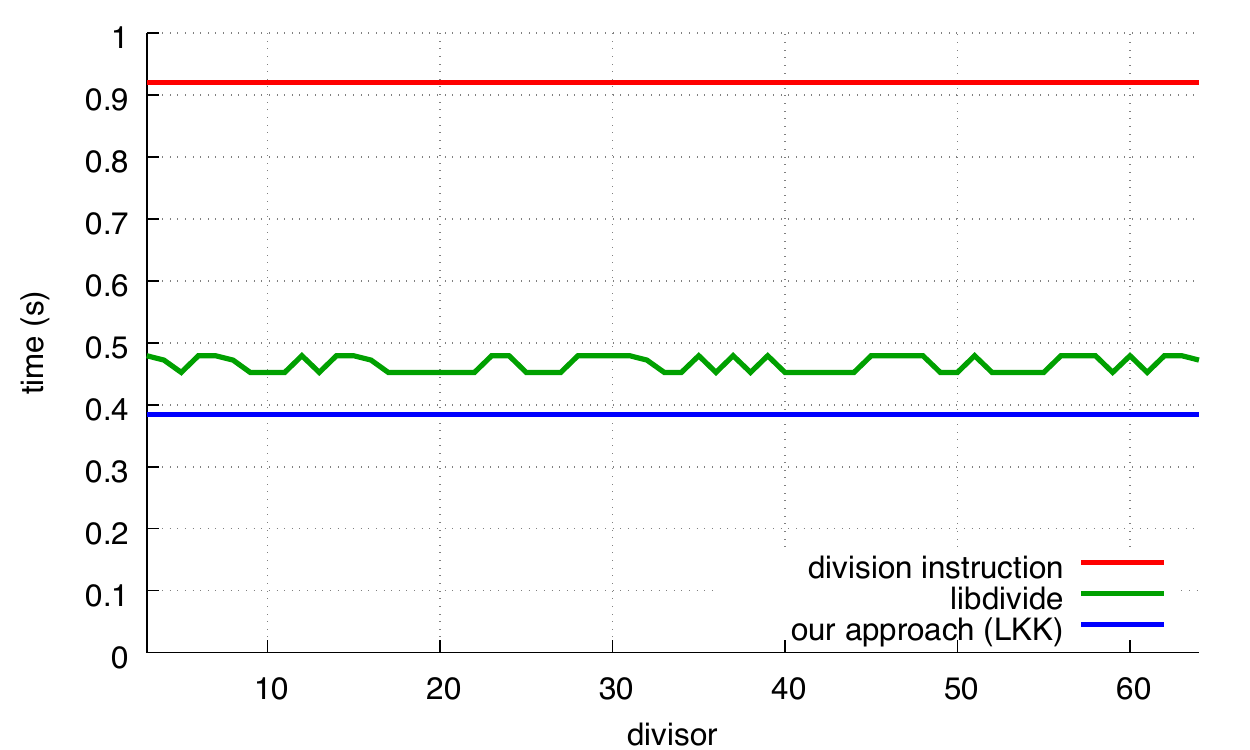}
}
\caption{\label{fig:beatinglibdivide}Wall-clock time to compute 100~million random integers using a linear congruential generator with various divisors passed as a program parameter (Skylake x64).}
\end{figure}

\subsection{Competing for Divisibility}
\label{sec:divisibility-experiments}

We adapted a prime-counting benchmark distributed with libdivide, specialized to
32-bit operands. The code determines the number of primes in
$[2,40000)$ using a simplistic approach: odd numbers in this range
are checked for divisibility by any smaller number that has already been determined
to be prime.  See Fig.~\trimmedref{fig:divisibility-lkk-in-c}.
When a number is identified as a prime,
we compute its scaled approximate reciprocal ($\reciprocal$) value, which is repeatedly used in future
iterations. In this manner, the computation of $\reciprocal$ is only done once per prime, and not
once per trial division by the prime.
A major difference from the benchmark using the linear-congruential generator is that
we cycle rapidly between different divisors, making it much more difficult to
predict branches in the libdivide functions.

\begin{figure}\centering
\begin{minipage}{0.6\textwidth}
\begin{lstlisting}
int count_primes_under_N() {
  int primectr=0;
  static uint64_t prime_cvals[N];

  for (uint32_t n=3; n < N; n += 2) {
    bool isprime=true;
    for (int j=0; j < primectr; ++j) {
      if (lkk_divisible(n, prime_cvals[j])) {
	isprime = false;
	break;
      }
    }
    if (isprime)
      prime_cvals[primectr++] = lkk_cvalue(n);
  }
  return (1+primectr);  // 2 is also prime.
}
\end{lstlisting}
\end{minipage}
\caption{\label{fig:divisibility-lkk-in-c} Prime-counting benchmark for the
   unsigned divisibility test. The code shown is for the LKK approach, similar code is used for other strategies.}
\end{figure}

In these tests, we compare libdivide against LKK and GM, the fast divisibility tests
whose implementations are shown in \S~\ref{sec:divisibility-impl}; see Fig.~\trimmedref{fig:lkk-unsigned-impl} for LKK and Fig.~\trimmedref{fig:gm-unsigned-impl} for GM\@.
Divisibility
of a candidate prime is checked either using
\begin{itemize}
\item libdivide to divide, followed by multiplication and subtraction to determine whether the remainder is nonzero; 
\item the Granlund-Montgomery (GM) divisibility check,
as in Fig.~\trimmedref{fig:gm-unsigned-impl};
\item the C \texttt{\%} operation, which uses a division instruction;
\item our LKK divisibility check (Fig.~\trimmedref{fig:lkk-unsigned-impl}).
\end{itemize}
LKK stores 64 bits for each
prime; GM requires an additional 5-bit rotation amount.
The division-instruction version of the benchmark only needs to store 32~bits per prime.  The libdivide approach requires 72~bits per prime, because we explicitly store the primes.

Instruction counts and execution speed are both important.
All else being equal, we would prefer that compilers
emit smaller instruction sequences.
Using a hardware integer division will yield the smallest
code, but this might give up too much speed.
In the unsigned case, our LKK has a
significant code-size advantage over GM---approximately 3~arithmetic instructions to compute our $\reciprocal$ versus about 11 to compute their required constant.
Both fast approaches use a multiplication and comparison for each subsequent
divisibility check. GM requires an additional instruction to rotate the result
of the multiplication.

Performance results for the unsigned case are shown in
Table~\trimmedref{tbl:libdiv-primes-bench}, showing the total number of
processor cycles on each platform from 1000~repetitions of the benchmark.
On Skylake, 20~repeated computations yielded cycle-count results within 0.3\% of each other.
For ARM, results were always within 4\%. Initially, Ryzen results would sometimes
differ by up to 10\% within 20~attempts, even after we attempted to control such factors as
dynamic frequency scaling.  Thus, rather than reporting the first measurement for each
benchmark, the given Ryzen results are the average of 11~consecutive attempts (the basic benchmark
was essentially executed 11\,000~times).   Our POWER8 results (except one outlier) were within 7\% of one
another over multiple trials and so we averaged several attempts (3 for GCC
and 7 for Clang) to obtain each data point.  Due to
platform constraints, POWER8 results are user-CPU times that matched the wall-clock times.

LKK has a
clear speed advantage in all cases, including the POWER8
and ARM platforms. LKK is between 15\% to 80\% faster than GM\@. Both GM and LKK always are much
faster than using an integer division instruction (up to $7\times$ for Ryzen) 
and they also outperform the best algorithm in libdivide.

\begin{table}
\caption{\label{tbl:libdiv-primes-bench}
Processor cycles (in gigacycles) to determine the number of primes
less than 40000, 1000 times, using unsigned 32-bit computations.  \emph{Branchful} and \emph{branchless} are
libdivide alternatives. Note that libdivide was only available for the x64 systems as it uses platform-specific optimizations.  POWER8 results are in user CPU seconds.
Boldfacing indicates
the fastest approach.
}

\begin{center}
\begin{tabular}{ccccccccc|cc} \toprule
\multirow{2}{*}{Algorithm}          & \multicolumn{2}{c}{Skylake}  & \multicolumn{2}{c}{Haswell} & \multicolumn{2}{c}{Ryzen}   & \multicolumn{2}{c}{ARM} & \multicolumn{2}{c}{POWER8}\\
                   &  GCC    & Clang              &  GCC           &  Clang    & GCC          &  Clang        & GCC         & Clang  & GCC     &  Clang \\ \midrule
            division instruction &  72     & 72                 &  107           &   107     &  131         &   131         &  65        & 65  &  18 & 17\\
        branchful &  46     & 88                 &  56            &   98      &  59          &   71          &  --        & --  & --  & --\\
        branchless &  35     & 35                 &  36            &   37      &  34          &   37          &  --        & -- & -- & --\\
        LKK&\textbf{18} &\textbf{18}      &  \textbf{18}   &\textbf{18}& \textbf{17}  &  \textbf{18}  &   \textbf{27}& \textbf{27} & \textbf{8.7} & \textbf{8.0} \\
        GM &  24     & 27                 &  27            &   28      &  27          &  32           &   36       & 37  & 10 & 11 
        \\\midrule
        GM/LKK & 1.33 & 1.50 & 1.50 & 1.55 & 1.59 & 1.77 & 1.33 & 1.37 & 1.15 & 1.38   
        \\\bottomrule
\end{tabular}
\end{center}
\end{table}



\section{Conclusion}
To our knowledge, we present the first general-purpose algorithms to compute the remainder of the division by unsigned or signed constant divisors directly, using the fractional portion of the product of the numerator with the approximate reciprocal\cite{jacobsohn1973combinatoric,Vowels:1992:D}. On popular x64 processors (and to a lesser extent on POWER), we can produce code for the remainder of the integer division that is faster than
the code produced by well regarded
compilers (GNU GCC and LLVM's Clang) when the divisor constant is known at compile time, using small C functions. Our functions are up to 30\% faster and with only half the number of instructions for most divisors. Similarly, when the divisor is reused, but is not a compile-time constant, we can surpass a popular library (libdivide) by about 15\% for most divisors. 

We can also speed up a test for divisibility. Our approach (LKK) is several times faster than the code produced by popular compilers. It is faster than  the Granlund-Montgomery  divisibility check\cite{Granlund:1994:DII:773473.178249}, sometimes nearly twice as fast. It is advantageous on all tested platforms (x64, POWER8 and 64-bit ARM).

Though compilers already produce efficient code, we show that  additional gains are possible.
As future work, we could compare against more compilers and other libraries. 
Moreover, various additional optimizations
are possible, such as for division by powers of two.

\section*{Acknowledgments}
The work is supported by the
Natural Sciences and Engineering Research Council of Canada
under grant RGPIN-2017-03910.
The authors are grateful to IBM's Centre for Advanced Studies --- Atlantic
and Kenneth Kent for access to the POWER~8
system.


\bibliography{wordmodulo}

\begin{thebibliography}{10}

\bibitem{fog2016instruction}
Fog A. {\it Instruction tables: Lists of instruction latencies, throughputs and
  micro-operation breakdowns for {Intel, AMD} and {VIA CPUs}. }Copenhagen
  University College of Engineering Copenhagen, Denmark; 2016.
\newblock \url{http://www.agner.org/optimize/instruction_tables.pdf}. Accessed
  May 31, 2018.

\bibitem{arma57}
{\it Cortex-A57 Software Optimization Guide. }{ARM Holdings}; 2016.
\newblock
  \url{http://infocenter.arm.com/help/topic/com.arm.doc.uan0015b/Cortex_A57_Software_Optimization_Guide_external.pdf}.
  Accessed May 31, 2018.

\bibitem{jacobsohn1973combinatoric}
Jacobsohn DH. A combinatoric division algorithm for fixed-integer divisors.
  {\it IEEE T. Comput.. }1973;100(6):608--610.

\bibitem{Vowels:1992:D}
Vowels RA. Division by 10.  {\it Aust. Comput. J.. }1992;24(3):81--85.

\bibitem{Granlund:1994:DII:773473.178249}
Granlund T, Montgomery PL. Division by invariant integers using multiplication.
   {\it SIGPLAN Not.. }1994;29(6):61--72.

\bibitem{Cavagnino:2008:EAI:1388169.1388172}
Cavagnino D, Werbrouck AE. Efficient algorithms for integer division by
  constants using multiplication.  {\it Comput. J.. }2008;51(4):470--480.

\bibitem{warr:hackers-delight-2e}
Warren HS. {\it Hacker's Delight}.
\newblock Boston: Addison-Wesley; 2nd~ed.2013.

\bibitem{Artzy:1976:FDT:359997.360013}
Artzy E, Hinds JA, Saal HJ. A fast division technique for constant divisors.
  {\it Commun. ACM. }1976;19(2):98--101.

\bibitem{331617}
Raghuram PS, Petry FE. Constant-division algorithms.  {\it IEE P-Comput. Dig.
  T.. }1994;141(6):334-340.

\bibitem{Doran1995}
Doran RW. Special cases of division.  {\it J. Univers. Comput. Sci..
  }1995;1(3):176--194.

\bibitem{7933010}
Ugurdag F, Dinechin F~De, Gener YS, G\"oren S, Didier LS. Hardware division by
  small integer constants.  {\it IEEE T. Comput.. }2017;66(12):2097--2110.

\bibitem{Rutten:2010:EFP:1731022.1731026}
Rutten L, Van~Eekelen M. Efficient and formally proven reduction of large
  integers by small moduli.  {\it ACM Trans. Math. Softw..
  }2010;37(2):16:1--16:21.

\bibitem{moller2011improved}
Moller N, Granlund T. Improved division by invariant integers.  {\it IEEE T.
  Comput.. }2011;60(2):165--175.

\bibitem{knuth2}
Knuth DE. {\it Seminumerical Algorithms}.
\newblock The Art of Computer ProgrammingReading, MA: Addison-Wesley;
  2nd~ed.1981.

\bibitem{power8uarch}
Sinharoy B, Van~Norstrand J.A., Eickemeyer R.J., et al. IBM POWER8 processor
  core microarchitecture.  {\it IBM Journal of Research and Development.
  }2015;59.

\end{thebibliography}

\end{document}